\theoremstyle{plain}
\newtheorem{theorem}{\textbf{Theorem}}
\newtheorem{lemma}[theorem]{\textbf{Lemma}}
\newtheorem{corollary}[theorem]{\textbf{Corollary}}
\newtheorem{conjecture}[theorem]{\textbf{Conjecture}}
\newtheorem{open}[theorem]{\textbf{Open problem}}
\theoremstyle{definition}
\newtheorem{definition}[theorem]{\textbf{Definition}}
\theoremstyle{remark}
\newtheorem{remark*}{Remark}
\newtheorem{example}[theorem]{\textbf{Example}}
\newcommand{\N}{\mathbb{N}}
\newcommand{\B}{\mathbb{B}}
\newcommand{\1}{\mathtt{1}}
\newcommand{\0}{\mathtt{0}}
\newcommand{\bool}{\{\0,\1\}}
\renewcommand{\O}{\mathcal{O}}
\newcommand{\Poly}{{\mathsf{P}}}
\newcommand{\PPoly}{{\oplus\mathsf{P}}}
\newcommand{\ModkPoly}[1]{{\mathsf{Mod_{#1}P}}}
\newcommand{\ModPoly}{{\mathsf{ModP}}}
\newcommand{\SPoly}{{\#\mathsf{P}}}
\newcommand{\NP}{{\mathsf{NP}}}
\newcommand{\coNP}{{\mathsf{coNP}}}
\newcommand{\DP}{{\mathsf{DP}}}
\newcommand{\PSPACE}{{\mathsf{PSPACE}}}
\newcommand{\coPSPACE}{{\mathsf{coPSPACE}}}
\newcommand{\FPSPACEPoly}{{\mathsf{FPSPACE(poly)}}}
\newcommand{\FPoPSPACE}{{\mathsf{FP}}^{\mathsf{PSPACE}}}
\newcommand{\SPSPACE}{{\mathsf{\#PSPACE}}}
\newcommand{\entiers}[2][2]{\ifthenelse{\equal{#1}{2}}
{\llbracket #2\rrbracket}
{\ifthenelse{\equal{#1}{0}}{
\ifthenelse{\equal{11}{\the\catcode`#2}}
{\{0,\ldots,#2 -1\}}
{\newcounter{finaln}
\setcounter{finaln}{#2 -1} 
\{0,\ldots,\thefinaln \}}}
{\{1,\ldots,#2\}}}}
\renewcommand{\int}[1]{[{#1}]}
\newcommand{\mmjblock}[2]{{#1}_{(#2)}}
\newcommand{\mmjoblock}[2]{{#1}_{\{#2\}}}
\newcommand{\Wl}{(W_i)_{i\in\entiers{\ell}}}
\newcommand{\Sk}{\{S_k\}_{k\in\entiers{s}}}
\newcommand{\BPn}{\mathsf{BP}_n}
\newcommand{\lcm}{\text{lcm}}
\newcommand{\decisionpb}[4]{\fbox{\parbox{.98\textwidth}{
  {#1} ({\bf #2})\\
  {Input:} #3\\
  {Question:} #4
}}}
\title{Complexity of Boolean automata networks\\under block-parallel update 
	modes}
\author[1,2]{K{\'e}vin Perrot}
\author[1,2]{Sylvain Sen{\'e}}
\author[2]{L{\'e}ah Tapin}
\affil[1]{Universit{\'e} publique, Marseille, France}
\affil[2]{Aix-Marseille Univ, CNRS, LIS, Marseille, France}
\date{}
\begin{document}
\renewcommand{\labelitemi}{$\bullet$}
\renewcommand{\labelitemii}{$\bullet$}
\setlist[itemize,enumerate]{nosep}

\maketitle

\begin{abstract}
  Boolean automata networks (\emph{aka} Boolean networks)
  are space-time discrete dynamical systems,
  studied as a model of computation and as a representative model of natural phenomena.
  A collection of simple entities (the automata) update their $0$-$1$ states according to local rules.
  The dynamics of the network is highly sensitive to update modes,
  i.e., to the schedule according to which the automata apply their local rule.
  A new family of update modes appeared recently,
  called block-parallel,
  which is dual to the well studied block-sequential.
  Although basic, it embeds the rich feature of update repetitions among
  a temporal updating period, allowing for atypical asymptotic behaviors.
  In this paper, we prove that it is able to breed complex computations,
  squashing almost all decision problems on the dynamics
  to the traditionally highest (for reachability questions) class $\PSPACE$.
  Despite obtaining these complexity bounds for a broad set of local and global properties,
  we also highlight a surprising gap: bijectivity is still $\coNP$.
\end{abstract}


\section{Introduction}

Boolean automata networks, often called Boolean 
networks for the sake of conciseness, belong to the wide family of automata 
networks, which includes all distributed models of computation 
defined locally by means of individual entities (called automata) which interact with each other over 
discrete time, such as cellular automata for instance, performing global 
computations that emerge only from local interactions.\smallskip

The first examples of Boolean automata networks originate in the 1940s in the 
seminal work of McCulloch and Pitts on neural networks~\cite{J-McCulloch1943}, 
where a finite set of neurons are modelled by automata updating their states 
depending on a local threshold Boolean functions.
This model led the computer science community to deal with its formal 
computational properties, allowing Kleene to introduce the concepts of 
finite automata and regular expressions~\cite{RT-Kleene1951,CL-Kleene1956}
and paving the way toward more algebraic, combinatorial and dynamical 
considerations around linear feedback shift 
registers~\cite{J-Kruskal1954,J-Huffman1959,J-Elspas1959,B-Golomb1967}.
It opened numerous further fundamental questions from complexity and 
computability standpoints from the end of the 1970s.
In parallel of these researches anchored on the foundations of computer 
science, basing himself on Hebb's works~\cite{B-Hebb1949}, Rosenblatt 
introduced the perceptron as a simplification of formal neural 
networks~\cite{J-Rosenblatt1958}, opening the field of concrete artificial 
intelligence. 

Whilst such networks were first and foremost viewed as models of computations, 
the junction between the 1960s and the 1970s saw the appearance of Boolean 
automata networks (where local functions can be any Boolean function) in a 
biological modelling framework. 
Indeed, Kauffman and Thomas emphasised the relevance of these 
networks in the context of gene regulation qualitative 
modelling~\cite{J-Kauffman1969,J-Thomas1973}, which has been repeatedly 
confirmed since the 
1990s~\cite{J-Mendoza1998,C-Akutsu1999,J-Giacomantonio2010,J-Wooten2019}.
\smallskip


Our contribution takes place in a fundamental framework at the frontier between 
theoretical computer science, discrete mathematics, and systems biology. 
When working on Boolean automata networks,
it is utmost important to define the way automata update their state over time
(namely the update mode), in order to obtain a discrete dynamical system.
Indeed:
\begin{itemize}
\item Even if the fixed points obtained under the parallel update mode are 
  fixed points obtained under any other update mode~\cite{B-Goles1990},
  specific update modes may generate additional fixed points
  (e.g., block-parallel).
\item The limit cycles which are not fixed points, obtained under a given 
  update mode, are not necessarily conserved under another update 
  mode~\cite{J-Demongeot2008,J-Goles2008,J-Aracena2009,C-Goles2010,J-Aracena2013}.
\end{itemize}
In other words, a Boolean automata network may admit a large number of 
distinct dynamics (depending on the update mode),
which requires a strong attention, in particular when it is 
employed as a phenomenological model in systems biology, namely a model of a 
genetic regulation network for instance. 
In order to achieve a better understanding on genetic regulation, it 
is impossible to decorrelate it to genetic expression. 
Now, the way genes express during a time laps, i.e., the choice of which
genes are transcribed into their association mRNA portion, is a crucial 
problem for which biologists have no clear answer at present. 
Nonetheless, an emerging insight which has become a full-fledged research 
track concerns chromatin dynamics, namely the way the chromatin node 
``geometrically'' evolves over time inside the cell nucleus, letting the 
externally located DNA portions being the only processed by the 
transcription biological 
machinery~\cite{J-Hansen1992,J-Benecke2006,J-Hubner2010,J-Fierz2019}.

From a theoretical standpoint, advances on chromatin dynamics tend to show that 
genetic expression is neither purely asynchronous nor purely synchronous, 
which paves the way to studies on in-between update modes in our framework. 
Anyhow, considering Boolean automata networks as models of computation or as 
models of real biological networks, the question of the impact of update 
modes on their dynamical behaviours is central and needs particular 
attention.

In this line, this paper aims at studying the peculiar role and impact of a 
family of periodic update modes introduced recently, the block-parallel 
update modes, which have been shown to have interesting and relevant features 
from both formal and applied standpoints~\cite{J-Demongeot2020,C-Perrot2024}, 
in the sense that \emph{(i)} they can generate fixed points which are not 
fixed points of the dynamical system obtained when the underlying network 
evolves synchronously, and \emph{(ii)} they can implement specific biological 
timers which are intrinsically governed by phenomena 
exogenous to regulatory control.

These update modes have been addressed formally in~\cite{C-Perrot2024} from a 
standpoint mixing combinatorial and intrinsic simulation approaches.
Here, we increase our understanding by means of complexity theory,
providing ground results on classical decision 
problems related to fixed points and limit cycles, reachability, \emph{etc}.
These new complexity bounds are particularly relevant,
because they highlight that most decision 
problems known to be $\NP$-complete under block-sequential update modes, such 
as the image/preimage problems, the fixed point problem, the limit cycle 
problem~\cite{J-Floreen1989,C-Bridoux2021,HDR-Perrot2022}, are 
$\PSPACE$-complete under block-parallel update modes.
It suggests that the ``expressivity'' of such update modes
comes at a high cost in terms of simulation,
which strengthens the need for structural results.
For instance, usually, computing a single evolution step of a 
dynamical system, i.e., computing the image of a configuration, is
feasible in polynomial time by evaluating some Boolean circuit; under the 
block-parallel update mode family, this problem becomes $\PSPACE$-hard. 
It follows that most problems are squashed to $\PSPACE$,
but there are notable exceptions, related to bijectivity and steadyness.

In Section~\ref{s:def}, we define formally Boolean automata networks under block-parallel update modes,
and present known results.
Section~\ref{s:complexity} exposes our results, starting with an outline of the constructions
which will be used throughout the proofs.
A first set of classical problems on computing images, preimages, fixed points and limit cycles is
characterized: they all jump from $\NP$ (under block-sequential update modes)
to $\PSPACE$ (under block-parallel update modes).
A second set of problems is studied, and a general bound on the recognition of
functional subdynamics is proved.
Regarding global properties, recognizing bijective dynamics remains $\coNP$-complete,
and recognizing constant dynamics becomes $\PSPACE$-complete.
The case of identity recognition is much subtler, and we provide three incomparable bounds:
a trivial $\coNP$-hardness one, a tough $\ModPoly$-hardness, and a $\FPoPSPACE$-completeness
result derived from the recent literature.
In Section~\ref{s:conclusion}, we summarize the results and expose perspectives.

%

\section{Definitions and state of the art}
\label{s:def}

We denote the set of integers by $\entiers{n} = \entiers[0]{n}$,
the Booleans by $\B = \bool$,
the $i$-th component of a vector $x \in \B^n$ by $x_i \in \B$,
and the restriction of $x$ to domain $I\subset\entiers{n}$ by $x_I \in \B^{|I|}$.
For two graphs $G = (V(G),A(G))$ and $H = (V(H),A(H))$,
we denote by $G \sim H$ when they are isomorphic,
i.e.,~when there is a bijection
$\pi : V(G) \to V(H)$ such that $(x,y) \in A(G) \iff (\pi(x), \pi(y)) \in A(H)$.
We denote by $G\sqsubset H$ when $G$ is a subgraph of $H$,
i.e.,~when $G'$ such that $G'\sim G$ can be obtained from $H$ by vertex and arc deletions.


\paragraph{Boolean automata network} 

A \emph{Boolean automata network} (BAN) is a discrete dynamical system on $\B^n$.
A configuration $x\in\B^n$ associates to each of the $n$ automata among $\entiers{n}$ a Boolean state among $\B$.
The individual dynamics of a each automaton $i\in\entiers{n}$
is described by a local function $f_i:\B^n\to\B$ giving its new state according to the current configuration.
To get a dynamics, one needs to settle the order in which the automata update their state
by application of their local function.
That is, an \emph{update schedule} must be given.
The most basic is the parallel update schedule,
where all automata update their state synchronously at each step,
formally as $f:\B^n\to\B^n$ defined by $\forall x\in\B^n:f(x)=(f_0(x),f_1(x),\dots,f_{n-1}(x))$.
In this work, we concentrate on the block-parallel update schedule,
motivated by the biological context of gene regulatory networks,
where each automaton is a gene and the dynamics give clues on cell phenotypes.
Not all automata will be update simultaneously as in the parallel update mode.
They will instead be grouped by subsets.
For simplicity in defining the local functions of a BAN,
we extend the $f_i:\B^n\to\B$ notation to
subsets $I\subseteq\entiers{n}$ as $f_I:\B^n\to\B^{|I|}$.
We also denote $\mmjblock{f}{I}:\B^n\to\B^n$ the update of automata from subset $I$, defined as:
\[
  \forall i \in \entiers{n}:
  \mmjblock{f}{I}(x)_i =
  \begin{cases}
    f_i(x) & \text{if } i \in I\\
    x_i & \text{otherwise.}
  \end{cases}
\]

\paragraph{Block-sequential update schedule} 
A \emph{block-sequential} update schedule is an \emph{ordered partition} of $\entiers{n}$,
given as a sequence of subsets $\Wl$ where $W_i \subseteq \entiers{n}$
is a \emph{block}.
The automata within a block are updated simultaneously,
and the blocks are updated sequentially.
During one iteration (\emph{step}) of the network,
the state of each automaton is updated exactly once.
The update of each block is called a \emph{substep}.
This update mode received great attention on many aspects.
The concept of the \emph{update digraph} is introduced in~\cite{J-Aracena2009}
and characterized in~\cite{J-Aracena2011}
to capture equivalence classes of block-sequential update schedules (leading to the same dynamics).
Conversions between block-sequential and parallel update schedules are investigated
in~\cite{C-Perrotin2023} (how to parallelize a block-sequential update schedule),
\cite{C-Goles2010} (the preservation of cycles throughout the parallelization process),
and~\cite{C-Bridoux2017} (the cost of sequentialization of a parallel update schedule).


\paragraph{Block-parallel update schedule}
A \emph{block-parallel} update schedule is a \emph{partitioned order} of $\entiers{n}$,
given as a set of subsets $\mu=\Sk$ where $S_k = (i^k_0, \dots, i^k_{n_{k}-1})$
is a sequence of $n_k>0$ elements of $\entiers{n}$ for all $k \in \entiers{s}$, called an \emph{o-block}
(shortcut for \emph{ordered-block}).
Each automaton appears in exactly one o-block.
It follows an idea dual to the block-sequential update mode:
the automata within an o-block are updated sequentially,
and the o-blocks are updated simultaneously.
The set of block-parallel update modes of size $n$ is denoted $\BPn$.
Formally, the update of $f$ under $\mu\in\BPn$ is given by
$\mmjoblock{f}{\mu} : \B^n \to \B^n$ defined, with $\ell=\lcm(n_1,\dots,n_s)$, as:
\[
  \mmjoblock{f}{\mu}(x) = 
  \mmjblock{f}{W_{\ell-1}} \circ
  \dots \circ
  \mmjblock{f}{W_1} \circ
  \mmjblock {f}{W_0}(x)
\]
where for all $i \in \entiers{\ell}$ we define 
$W_i = \{i^k_{i \mod n_k} \mid k \in \int{s}\}$.
In order to compute the set of automata updated at each substep,
it is possible to convert a block-parallel update schedule into
a sequence of blocks of length $\ell$
(which is usually not a block-sequential update schedule,
because repetitions of automaton update may appear~\cite{C-Perrot2024}).
We defined this map as $\varphi$:
\[
  \varphi(\Sk) = 
  \Wl \text{ with } W_i = 
  \{i^k_{i \mod n_k} \mid k \in \int{s}\}\text{.}
\]
An example is given on Figure~\ref{fig:example}.
The parallel update schedule corresponds to the
block-parallel update schedule $\mu_\texttt{par}=\{(i) \mid i \in \entiers{n}\} \in \BPn$,
with $\varphi(\mu_\texttt{par})=(\entiers{n}$),
i.e., a single block containing all automata is updated at each step (there is only one substep).

\begin{figure}
  \centering
  \includegraphics{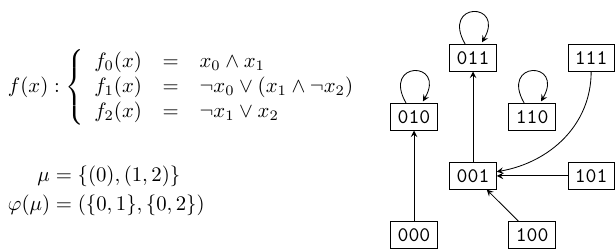}
  \caption{
    Example of an automata network of size $n=3$ with a block-parallel update mode $\mu\in\BPn$.
    Local functions (upper left), conversion of $\mu$ to a sequence of blocks (lower left),
    and dynamics of $\mmjoblock{f}{\mu}$ on configurtion space $\B^3$ (right).
    As an example, in computing the image of configuration $\1\1\1$,
    the first substep (update of automata $0$ and $1$) gives $\1\0\1$,
    and the second substep (update of automata $0$ and $2$) gives $\0\0\1$.
  }
  \label{fig:example}
\end{figure}

Block-parallel update schedules have been introduced in~\cite{J-Demongeot2020},
motivated by applications to gene regulatory networks, and 
their ability to generate new stable configurations
(compared to block-sequential update schedules).
A first theoretical study has been conducted in~\cite{C-Perrot2024},
providing counting formulas and enumeration algorithms,
subject to equivalence relations on the produced dynamics.

\paragraph{Fixed point and limit cycle}
A BAN $f$ of size $n$ under block-parallel update schedule $\mu\in\BPn$
defines a deterministic discrete dynamical system $\mmjoblock{f}{\mu}$ on configuration space $\B^n$.
Since the space is finite, the orbit of any configuration
is ultimately periodic.
For $p\geq 1$, a sequence of configurations $x^0,\dots,x^{p-1}$ is a \emph{limit cycle} of length $p$
when $\forall i\in\entiers{p}:\mmjoblock{f}{\mu}(x^i)=x^{i+1\mod p}$.
For $p=1$ we call $x\in\B^n$ such that $\mmjoblock{f}{\mu}(x)=x$ a \emph{fixed point}.

\paragraph{Complexity}
To be given as input to a decision problem, a BAN is encoded as a tuple of
$n$ Boolean circuits, one for each local function $f_i:\B^N\to\B$ for $i\in\entiers{n}$.
This encoding can be seen as Boolean formulas for each automaton,
and easily implements high-level descriptions with if-then-else statements
(used intensively in our constructions).

The computational complexity of finite discrete dynamical systems has been explored
on the related models of finite cellular automata~\cite{J-Sutner1995}
and reaction networks~\cite{J-Dennunzio2019}.
Regarding automata networks, fixed points received early attention in~\cite{J-Alon1985}
and~\cite{J-Floreen1989}, with existence problems complete for $\NP$.
Because of the fixed point invariance for block-sequential update schedules~\cite{B-Robert1986},
the focus switched to limit cycles~\cite{J-Aracena2013,C-Bridoux2021},
with problems reaching the second level of the polynomial hierarchy.
The interplay of different update schedules has been investigated in~\cite{J-Aracena2013}.
Finaly, let us mention the general complexity lower bounds,
established for any first-order question on the dynamics,
under the parallel update schedule~\cite{C-Gamard2021}.

\section{Computational complexity under block-parallel updates}
\label{s:complexity}

Computational complexity is important to anyone willing to use algorithmic tools
in order to study discrete dynamical systems.
Lower bounds inform on the best worst case time or space one can expect
with an algorithm solving some problem.
The $n$ local functions of a BAN are encoded as Boolean circuits,
which is a convenient formalism corresponding to the high level descriptions one usually employs.
The update mode is given as a list of lists of integers, each of them being encoded either in unary or binary
(this makes no difference, because the encoding of local functions already has a size greater than $n$).

In this section we characterize the computational complexity of typical problems
arising in the framework of automata networks.
We will see that almost all problems reach $\PSPACE$-completeness.
The intuition behind this fact is that the description of a block-parallel update mode
may expend (through $\varphi$) to an exponential number of substeps,
during which a linear bounded Turing machine may be simulated via iterations of a circuit.
We first recall this folklore building block and present a general outline of our constructions
(Subsection~\ref{ss:construction}).
Then we start with results on computing images, preimages, fixed points and limit cycles
(Subsection~\ref{ss:image}),
before studying reachability and global properties of the function $\mmjoblock{f}{\mu}$
computed by an automata network $f$ under block-parallel update schedule $\mu$
(Subsection~\ref{ss:reach}).

\subsection{Outline of the $\PSPACE$-hardness constructions}
\label{ss:construction}

We will design polynomial time many-one reductions from the following
$\PSPACE$-complete decision problem, which appears for example in~\cite{J-Goles2016}.\\[.5em]
\decisionpb{Iterated Circuit Value Problem}{Iter-CVP}
{a Boolean circuit $C:\B^n\to\B^n$, a configuration $x\in\B^n$, and $i\in\entiers{n}$.}
{does $\exists t\in\N:C^t(x)_i=\1$?}

\begin{theorem}[folklore]
  {\normalfont \textbf{Iter-CVP}} is $\PSPACE$-complete.
\end{theorem}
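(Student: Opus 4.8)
The plan is to show membership in $\PSPACE$ and $\PSPACE$-hardness separately, as is standard for completeness results.

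For membership, I would argue as follows. Given a circuit $C:\B^n\to\B^n$, a configuration $x\in\B^n$, and an index $i\in\entiers{n}$, observe that the orbit $x, C(x), C^2(x), \dots$ lives in the finite space $\B^n$ of size $2^n$. Therefore the orbit is ultimately periodic, and if some iterate $C^t(x)$ has its $i$-th coordinate equal to $\1$, this must already happen within the first $2^n$ steps (since any configuration reachable beyond step $2^n$ has already been visited). The algorithm is then: starting from $x$, iteratively apply $C$, check the $i$-th coordinate at each step, and halt with acceptance if it ever equals $\1$, or with rejection once we have performed $2^n$ iterations without success. This procedure maintains only a single configuration of $n$ bits, a step counter that ranges up to $2^n$ (hence needs $n$ bits), and whatever working space is required to evaluate the circuit $C$ once, which is polynomial in the input size. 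Evaluating a Boolean circuit on a given input is feasible in space polynomial in the circuit size (the Circuit Value Problem is in $\Poly\subseteq\PSPACE$). Hence the whole computation runs in polynomial space, establishing membership in $\PSPACE$.

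For hardness, I would reduce from a canonical $\PSPACE$-complete problem, most naturally the acceptance problem for a polynomial-space-bounded (equivalently, linear-bounded) deterministic Turing machine. Given such a machine $M$ and input $w$, its configurations on a tape of length $p(|w|)$ can be encoded as Boolean vectors of length polynomial in $|w|$, and a single computation step of $M$ is a local, hence polynomial-size, transformation of this encoding. Standard constructions produce in polynomial time a Boolean circuit $C$ computing the one-step transition function of $M$, so that $C^t$ applied to the encoding of the initial configuration yields the encoding of the configuration after $t$ steps. I would reserve one designated output coordinate $i$ to flag whether $M$ has entered an accepting state (making accepting states absorbing, so the flag stays $\1$ once set). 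Then $M$ accepts $w$ if and only if $\exists t\in\N:C^t(x)_i=\1$, where $x$ encodes the initial configuration. This is exactly an instance of \textbf{Iter-CVP}, and the reduction is computable in polynomial time.

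The main obstacle is the hardness direction, specifically the faithful encoding of the Turing machine's transition function as a Boolean circuit of polynomial size together with the bookkeeping of head position, tape contents, and state in a single Boolean vector, while ensuring that accepting configurations are made absorbing so that the flag coordinate is monotone. The membership direction is routine given that the Circuit Value Problem is polynomial-time solvable. Since this is a folklore result, I would cite the analogous constructions in the literature (such as~\cite{J-Goles2016}) rather than reproduce the encoding in full detail.
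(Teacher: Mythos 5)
Your proposal is correct and follows essentially the same route as the paper: membership via iterating the circuit at most $2^n$ times with an $n$-bit counter (since the orbit is ultimately periodic in a space of size $2^n$), and hardness by a polynomial-time reduction from linear-space (equivalently, polynomial-space) Turing machine acceptance, encoding tape, head position, and state as Boolean coordinates with a designated acceptance-flag bit. The only cosmetic difference is that you make the acceptance flag absorbing, a detail the paper leaves implicit.
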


\begin{proof}
  The problem belongs to $\PSPACE$ because the circuit only needs to be iterated at most $2^n$ times
  (tracked with a counter on $n$ bits),
  without keeping track of the history. Indeed, since the space is finite of size $2^n$,
  after that number of iterations the limit behavior has been explored,
  hence either a $\1$ has appeared at position $i$ (positive instance),
  or none will ever appear (negative instance).

  The $\PSPACE$-hardness is obtained by reduction from \textbf{Linear Space Acceptance}~\cite[Section~7.4 page~175]{gareyjohnson},
  which asks whether a given linear bounded Turing machine $M$
  (i.e., a Turing machine using no more working tape cells than its input does on the input tape)
  accepts (or halts on) a given input $w$ of size $\ell$.
  One can construct in polynomial time from $M=(Q,\Gamma,\Sigma,B,q_i,q_f,\delta)$ a circuit with the following inputs and outputs:
  \begin{itemize}
    \item $\ell\,\lceil\log_2(|\Gamma|)\rceil$ bits for the input tape,
    \item $\ell\,\lceil\log_2(|\Gamma|)\rceil$ bits for the working tape,
    \item $\lceil\log_2(|Q|)\rceil$ bits for the internal state,
    \item $\lceil\log_2(\ell)\rceil$ bits for the position of the head,
    \item $1$ bit for signaling the acceptance (or halt),
  \end{itemize}
  such that one iteration of the circuit corresponds to one step of the Turing machine computation.
  The configuration $x$ for \textbf{Iter-CVP} corresponds to the starting configuration of $M$ on $w$,
  and $i$ is the bit signaling the acceptance (or halt).
\end{proof}


Before presenting the general outline of our constructions,
we need a technical lemma related to the generation of primes.

\begin{lemma}\label{lem:primes}
  For all $n \geq 2$, a list of distinct prime integers
  $p_1,p_2,\dots,p_{k_n}$ such that $2\leq p_i < n^2$ and $2^n < \prod_{i=1}^{k_n} p_i < 2^{2n^2}$
  can be computed in time $\O(n^2)$, with $k_n=\lfloor\frac{n^2}{2\ln(n)}\rfloor$.
\end{lemma}

\begin{proof}
  By the prime number theorem, there are approximately $\frac{N}{\ln(N)}$ primes lower than $N$.
  As a consequence, distinct prime integers $p_1,p_2,\dots,p_{k_n}$ with $k_n=\lfloor\frac{n^2}{\ln(n^2)}\rfloor$
  can be computed in time $\O(n^2)$ using Atkin sieve algorithm.
  Since $2 \leq p_i < n^2$, we have $2^{k_n} \leq \prod_{i=1}^{k_n} p_i < n^{2k_n}$.
  It holds that $2^{k_n} = 2^{\lfloor\frac{n^2}{2\ln(n)}\rfloor} > 2^n$, and
  $n^{2k_n} \leq n^{\frac{n^2}{\ln(n)}}$ with
  \[
    \log_2\left(n^\frac{n^2}{\ln(n)}\right) =
    \frac{\frac{n^2}{\ln(n)}}{\log_n(2)} =
    \frac{n^2}{\ln(2)}
  \]
  meaning that $n^{2k_n} \leq 2^{\frac{n^2}{\ln(2)}} < 2^{2n^2}$.
\end{proof}

Our constructions of automata netwoks and block-parallel update schedules
for the computational complexity lower bounds are based on the following.

\begin{definition}\label{def:gn}
  For any $n\geq 2$, let $p_1,p_2,\dots,p_{k_n}$ be the $k_n$ primes given by Lemma~\ref{lem:primes},
  and denote $q_j=\sum_{i=1}^{j}p_i$ their cumulative series for $j$ from $0$ to $k_n$.
  Define the automata network $g_n$ on $q_{k_n}$ automata $\entiers{q_{k_n}}$ with constant $\0$ local functions,
  where the components are grouped in o-blocks of length $p_i$, that is with
  $\mu_n=\bigcup_{i\in\entiers{k_n}}\{(q_i,q_i+1,\dots,q_{i+1}-1)\}$.
\end{definition}

\begin{lemma}\label{lem:gn}
  For any $n\geq 2$, one can compute $g_n$ and $\mu_n$ in time $\O(n^4)$,
  and $|\varphi(\mu_n)|>2^n$.
\end{lemma}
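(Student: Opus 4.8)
The plan is to split the statement into its two independent claims: the lower bound on $|\varphi(\mu_n)|$, and the $\O(n^4)$ computation time.

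First I would unfold the definition of $\varphi$ applied to $\mu_n$. By construction $\mu_n$ consists of $k_n$ o-blocks whose respective lengths are the distinct primes $p_1,\dots,p_{k_n}$ supplied by Lemma~\ref{lem:primes}. The map $\varphi$ produces a sequence of blocks of length $\ell=\lcm(p_1,\dots,p_{k_n})$, so that $|\varphi(\mu_n)|=\ell$. Since the $p_i$ are pairwise distinct primes, hence pairwise coprime, their least common multiple equals their product: $\ell=\prod_{i=1}^{k_n}p_i$. Lemma~\ref{lem:primes} guarantees $\prod_{i=1}^{k_n}p_i>2^n$, which yields $|\varphi(\mu_n)|>2^n$ directly.

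Next I would bound the size of the construction and the time to produce it. The number of automata is $q_{k_n}=\sum_{i=1}^{k_n}p_i$. Using $p_i<n^2$ and $k_n=\lfloor\frac{n^2}{2\ln(n)}\rfloor$, this sum is at most $k_n\cdot n^2<\frac{n^4}{2\ln(n)}=\O(n^4)$. Computing the primes themselves costs $\O(n^2)$ by Lemma~\ref{lem:primes}. Each of the $q_{k_n}$ local functions of $g_n$ is the constant $\0$ circuit, of constant size, so writing them all down is linear in $q_{k_n}$. Likewise, listing the o-blocks of $\mu_n$ amounts to enumerating each automaton exactly once, partitioned according to the cumulative sums $q_j$, again linear in $q_{k_n}$. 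Summing these contributions gives a total running time $\O(n^2)+\O(n^4)=\O(n^4)$.

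I expect no genuine obstacle here; the only point requiring care is the identity $\lcm(p_1,\dots,p_{k_n})=\prod_{i=1}^{k_n}p_i$, which hinges on the distinctness of the primes produced by Lemma~\ref{lem:primes}: were any prime repeated, the least common multiple could collapse and the exponential lower bound would fail. Everything else is a direct size count inherited from the bounds on $k_n$ and $\max_i p_i$.
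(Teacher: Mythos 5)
Your proposal is correct and follows essentially the same route as the paper: the time bound comes from $q_{k_n}=\sum_i p_i=\O(n^4)$ together with the $\O(n^2)$ prime generation of Lemma~\ref{lem:primes}, and the lower bound comes from $|\varphi(\mu_n)|=\lcm(p_1,\dots,p_{k_n})=\prod_{i=1}^{k_n}p_i>2^n$. You merely spell out two points the paper leaves implicit (that distinctness of the primes makes the lcm equal the product, and the explicit count $q_{k_n}\leq k_n\cdot n^2$), which is a welcome but not substantive difference.
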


\begin{proof}
  The time bound comes from Lemma~\ref{lem:primes} and the fact that $q_{k_n}$ is in $\O(n^4)$.
  The number of blocks in $\varphi(\mu_n)$ is the least common multiple of its o-block sizes,
  which is the product $\prod_{i=1}^{k_n} p_i$, hence from Lemma~\ref{lem:primes}
  we conclude that it is greater than $2^n$.
\end{proof}

The general idea is now to add some automata to $g_n$
and place them within singletons in $\mu_n$, i.e., each of them in a new o-block of length $1$.
We propose an example implementing a binary counter on $n$ bits.

\begin{example}\label{ex:counter}
  Given $n\geq 2$, consider $g_n$ and $\mu_n$ given by Lemma~\ref{lem:gn}.
  Construct $f$ from $g_n$ by adding $n$ Boolean components $\{q_{k_n},\dots,q_{k_n+n}\}$,
  whose local functions increment a binary counter on those $n$ bits,
  until it freezes to $2^n-1$ (all bits in state $\1$).
  Construct $\mu'$ from $\mu_n$ as $\mu'=\mu_n\cup\bigcup_{i\in\entiers{n}}\{(q_{k_n}+i)\}$,
  so that the counter components are updated at each substep.
  Observe that the pair $f,\mu'$ can be still be computed from $n$ in time $\O(n^4)$.
  Figure~\ref{fig:counter} illustrates an example of orbit for $n=3$,
  and one can notice that $\mmjoblock{f}{\mu'}$ is a constant function sending any $x\in\B^n$ to $\0^{q_{k_n}}\1^n$.
  \begin{figure}
    \centering
    \includegraphics{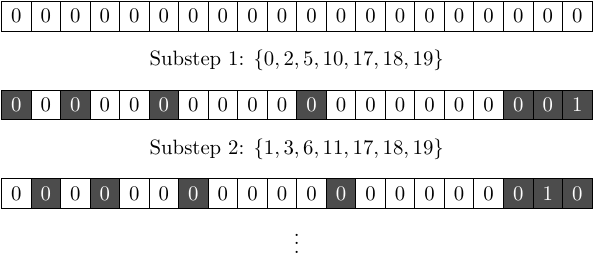}
    \caption{
      Substeps leading to the image of configuration $\0^{q_{k_n}}\0\1\0$ in $\mmjoblock{f}{\mu'}$
      from Example~\ref{ex:counter} for $n=3$ ($k_n=4$ and $q_{k_n}=2+3+5+7=17$).
      The last $3$ bits implement a binary counter, freezing at $7$ ($\1\1\1$).
      Above each substep the block of updated automata is given.
    }
    \label{fig:counter}
  \end{figure}
\end{example}

Remark that we will prove complexity lower bounds by reduction from \textbf{Iter-CVP},
where $n$ will be the number of inputs and outputs of the circuit to be iterated,
hence the integer $n$ itself will be encoded in unary.
As a consequence, the construction of Example~\ref{ex:counter} is computed in polynomial time.

\subsection{Images, preimages, fixed points and limit cycles}
\label{ss:image}

We start the study of the computational complexity of automata networks under block-parallel update schedules
with the most basic problem of computing the image $\mmjoblock{f}{\mu}(x)$
of some configuration $x$ through $\mmjoblock{f}{\mu}$
(i.e., one step of the evolution),
which is already $\PSPACE$-hard.
It is actually hard to compute even a single bit of $\mmjoblock{f}{\mu}(x)$).
We conduct this study as decision problems, where
the fixed point verification is a particular case, still $\PSPACE$-hard.
Recall that the encoding of $\mu$ (with integers in unary or binary) has no decisive influence on the input size,
this latter being characterized by the circuits sizes and in particular their number of inputs,
denoted $n$, which is encoded in unary.
\\[.5em]
\decisionpb{Block-parallel step bit}{BP-Step-Bit}
{$(f_i:\B^n\to\B)_{i\in\entiers{n}}$ as circuits, $\mu\in\BPn$, $x\in\B^n$, $j\in\entiers{n}$.}
{does $\mmjoblock{f}{\mu}(x)_j=1$?}
\\[.5em]
\decisionpb{Block-parallel step}{BP-Step}
{$(f_i:\B^n\to\B)_{i\in\entiers{n}}$ as circuits, $\mu\in\BPn$, $x,y\in\B^n$.}
{does $\mmjoblock{f}{\mu}(x)=y$?}
\\[.5em]
\decisionpb{Block-parallel fixed point verification}{BP-Fixed-Point-Verif}
{$(f_i:\B^n\to\B)_{i\in\entiers{n}}$ as circuits, $\mu\in\BPn$, $x\in\B^n$.}
{does $\mmjoblock{f}{\mu}(x)=x$?}\\

This first set of problems is related to the image of a given configuration $x$,
which allows the reasonings to concentrate on the dynamics of substeps for that single configuration $x$,
regardless of what happens for other configurations.
Note that $n$ will be the size of the \textbf{Iter-CVP} instance,
while the size of the automata network will be $q_{k_n}+\ell'+n+1$.

\begin{theorem}\label{thm:image}
  {\normalfont \textbf{BP-Step-Bit}}, {\normalfont \textbf{BP-Step}} and {\normalfont \textbf{BP-Fixed-Point-Verif}}
  are $\PSPACE$-complete.
\end{theorem}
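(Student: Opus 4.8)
The plan is to settle membership in $\PSPACE$ once for all three problems, and then to prove $\PSPACE$-hardness of all of them through a single reduction from \textbf{Iter-CVP}, using the clock $g_n$ of Lemma~\ref{lem:gn} to manufacture exponentially many substeps. For membership, the key observation is that although $\ell=\lcm(n_1,\dots,n_s)=|\varphi(\mu)|$ may be exponential, the image $\mmjoblock{f}{\mu}(x)$ can be computed by sweeping through the substeps while storing only the current configuration together with a binary counter on $\O(n)$ bits for the substep index: at substep $i$ one applies $\mmjblock{f}{W_i}$, where $W_i$ is the $i$-th block of $\varphi(\mu)$, which costs polynomial time and space and keeps no history. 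Thus $\mmjoblock{f}{\mu}(x)$ is computable in polynomial space, and reading bit $j$, comparing with $y$, or comparing with $x$ puts \textbf{BP-Step-Bit}, \textbf{BP-Step} and \textbf{BP-Fixed-Point-Verif} respectively in $\PSPACE$.

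For hardness, given an instance $(C,x,i)$ of \textbf{Iter-CVP} with $C:\B^n\to\B^n$, I would take $g_n,\mu_n$ from Lemma~\ref{lem:gn}, so that $\ell=|\varphi(\mu_n)|>2^n$, and adjoin three families of automata, each placed in its own singleton o-block so as to be updated at every substep: an $n$-bit \emph{register} whose local functions apply $C$; a single monotone \emph{flag} with local function $\mathrm{flag}\vee\mathrm{register}_i$; and a \emph{phase counter} on $\ell'=\O(n^2)$ bits counting substeps modulo $\ell$ (the integer $\ell$ being hardcoded, computable from $n$ by Lemmas~\ref{lem:primes} and~\ref{lem:gn}). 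The register is wired to reset to $x$ exactly when the phase counter indicates the last substep, and to apply $C$ otherwise. Starting from the configuration (clock $\0^{q_{k_n}}$, phase $0$, register $x$, flag $\0$), one application of $\mmjoblock{f}{\mu'}$ leaves the clock and the phase counter at their initial values, runs the register through $x,C(x),\dots,C^{\ell-1}(x)$ before snapping it back to $x$ at the final substep, and accumulates in the flag the disjunction $\bigvee_{t=0}^{\ell-1}C^t(x)_i$; since $\ell>2^n$ exceeds the transient-plus-period length of the orbit of $x$, this disjunction equals $\1$ precisely when the instance is positive. The pair $(f,\mu')$ is computed from $n$ in polynomial time, exactly as in Example~\ref{ex:counter}.

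Each problem then follows from this single construction. For \textbf{BP-Step-Bit} one takes $j$ to be the flag coordinate: $\mmjoblock{f}{\mu'}(\cdot)_j=\1$ iff the instance is positive. For \textbf{BP-Step}, because the register is forced back to $x$ independently of the answer, the image is the explicitly computable configuration $y$ equal to (clock $\0^{q_{k_n}}$, phase $0$, register $x$, flag $\1$) on positive instances, while on negative instances the flag stays $\0$ and the image differs from $y$ on the flag coordinate alone; so $\mmjoblock{f}{\mu'}(\cdot)=y$ decides the instance. For \textbf{BP-Fixed-Point-Verif}, I would test the starting configuration $c$ itself: one period leaves every component of $c$ unchanged except possibly the flag, whence $c$ is a fixed point iff the flag remains $\0$, i.e.\ iff the orbit of $x$ never reaches state $\1$ at coordinate $i$. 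As $\PSPACE=\coPSPACE$, this reduction from the complement of \textbf{Iter-CVP} yields $\PSPACE$-hardness.

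The step I expect to be the main obstacle is the fixed-point reduction. Contrary to the image-based problems, where it suffices to read or match a single configuration, fixed-point verification requires the whole configuration---including the register that has been iterating $C$ for exponentially many substeps---to return to its initial value. This is precisely what forces the period-synchronized phase counter (the $\ell'$ auxiliary automata) that detects the last substep and resets the register to $x$; the care lies in checking that the clock, the phase counter and the register all close up over one full period of length $\ell$, and that the reset is triggered only after the orbit has been entirely explored, which is guaranteed by $\ell>2^n$.
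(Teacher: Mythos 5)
Your proposal is correct and follows essentially the same route as the paper's proof: polynomial-space substep-by-substep simulation for membership, and a single reduction from \textbf{Iter-CVP} built on the clock of Lemma~\ref{lem:gn} together with a modulo-$\ell$ counter, a circuit-iterating register that resets at the last substep, and a monotone flag, from which all three problems are read off. Your treatment of \textbf{BP-Fixed-Point-Verif} via the complement of \textbf{Iter-CVP} (using $\PSPACE=\coPSPACE$) is exactly the paper's argument as well.
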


\begin{proof}
  The problems \textbf{BP-Step-Bit}, \textbf{BP-Step} and \textbf{BP-Fixed-Point-Verif}
  are in $\PSPACE$, with a simple algorithm obtaining $\mmjoblock{f}{\mu}(x)$ by
  computing the least common multiple of o-block sizes and
  then using a pointer for each block throughout
  the computation of that number of substeps
  (each substep evaluates local functions in polynomial time).

  \medskip

  We give a single reduction for the hardness of \textbf{BP-Step-Bit}, \textbf{BP-Step}
  and \textbf{BP-Fixed-Point-Verif},
  where we only need to consider the dynamics of the substeps starting from one configuration $x$.
  Given an instance of \textbf{Iter-CVP} with a circuit $C:\B^n\to\B^n$,
  a configuration $\tilde{x}\in\B^n$ and $i\in\entiers{n}$,
  we apply Lemma~\ref{lem:gn} to construct $g_n,\mu_n$ on automata set $P=\entiers{q_{k_n}}$.
  Automata from $P$ have constant $\0$ local functions,
  and the number of substeps is $\ell=|\varphi(\mu_n)|>2^n$
  thanks to the prime's $\lcm$.
  We define a BAN $f$ by adding:
  \begin{itemize}
    \item $\ell'=\lceil\log_2(\ell)\rceil$ automata numbered $B=\{q_{k_n},\dots,q_{k_n}+\ell'-1\}$,
      implementing a counter that increments modulo $\ell$ at each substep,
      and remains fixed when $x_B$ encodes an integer greater or equal to $\ell$
      (case not considered in this proof);
    \item $n$ automata numbered $D=\{q_{k_n}+\ell',\dots,q_{k_n}+\ell'+n-1\}$, whose local functions
      iterate $C:\B^n\to \B^n$ while the counter is smaller than $\ell-1$,
      and go to state $\tilde{x}$ when the counter reaches $\ell-1$, i.e., with
      \[
        f_D(x)=
        \begin{cases}
          C(x_D)&\text{if }x_B<\ell-1\text{,}\\
          \tilde{x}&\text{otherwise; and}
        \end{cases}
      \]
    \item $1$ automaton numbered $R=\{q_{k_n}+\ell'+n\}$, whose local function
      \[
        f_R(x)=x_R\vee x_{q_{k_n}+\ell'+i}
      \]
      records whether a state $\1$
      appeared at automaton in relative position $i$ within $D$.
  \end{itemize}
  We also add singletons to $\mu_n$ for each of these additional automata, by setting
  \[
    \mu'=\mu_n\cup\bigcup_{j\in B\cup D\cup R}\{(j)\}.
  \]
  Now, consider the dynamics of substeps in computing the image of configuration
  \[
    x=\0^{q_{k_n}}\0^{\ell'}\tilde{x}\0.
  \]
  During the first $\ell-1$ substeps:
  \begin{itemize}
    \item automata $P$ have constant $\0$ local function;
    \item automata $B$ increment a counter from $0$ to $\ell-1$;
    \item automata $D$ iterate circuit $C$ from $\tilde{x}$; and
    \item automaton $R$ records whether the $i$-th bit of $D$ has been in state $\1$ during some iteration.
  \end{itemize}
  During the last substep,
  automata $B$ go back to $\0^n$ because of the modulo, and
  automata $D$ go back to state $\tilde{x}$.
  Since the number of substeps $\ell$ is greater than $2^n$ (Lemma~\ref{lem:gn}),
  the iterations of $C$ search the whole orbit of $\tilde{x}$,
  and at the end of the step automaton $R$ has recorded whether the \textbf{Iter-CVP} instance
  is positive (went to state $\1$) or negative (still in state $\0$).
  The images are respectively
  $y_-=\0^{q_{k_n}}\0^{\ell'}\tilde{x}\0$ or $y_+=\0^{q_{k_n}}\0^{\ell'}\tilde{x}\1$.
  This concludes the reductions, to \textbf{BP-Step-Bit} by asking whether automaton $R$
  (numbered $q_{k_n}+2n$) is in state $\1$,
  to \textbf{BP-Step} by asking whether the image of $x$ is $y_+$,
  and to \textbf{BP-Fixed-Point-Verif} because $y_-=x$ ($\coPSPACE$-hardness).
\end{proof}

As a corollary, the associated functional problem is $\PSPACE$-hard for polynomial time Turing reductions
(not for many-one reductions, as there is no concept of negative instance for total functional problems).

\begin{corollary}\label{cor:image}
  Given an automata network $(f_i:\B^n\to\B)_{i\in\entiers{n}}$ as circuits,
  a block-parallel update schedule $\mu\in\BPn$ and a configuration $x\in\B^n$,
  the problem of computing $\mmjoblock{f}{\mu}(x)$ is in $\FPSPACEPoly$ and $\PSPACE$-hard.
\end{corollary}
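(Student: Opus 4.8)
The plan is to establish the two bounds of Corollary~\ref{cor:image} separately, leveraging the machinery already built for Theorem~\ref{thm:image}.

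For the upper bound, I would argue that computing $\mmjoblock{f}{\mu}(x)$ lies in $\FPSPACEPoly$ by exhibiting an explicit algorithm that reuses the $\PSPACE$ membership argument of Theorem~\ref{thm:image}. First I would compute $\ell=\lcm(n_1,\dots,n_s)$, the number of substeps, which may be exponential in the input size but whose binary representation fits in polynomially many bits. Then I would maintain a single working configuration in $\B^n$ (polynomial space) together with a counter for the current substep and, for each o-block $S_k$, a pointer into its sequence; at each of the $\ell$ substeps I determine the block $W_i=\{i^k_{i\bmod n_k}\mid k\in\int{s}\}$, evaluate the relevant local circuits (each in polynomial time, hence polynomial space), and overwrite the configuration in place. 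Since no history is retained, the total space used is polynomial in the input, and the output $\mmjoblock{f}{\mu}(x)\in\B^n$ has polynomial length, placing the problem in $\FPSPACEPoly$ (the restriction of $\FPSPACE$ to polynomially-bounded outputs).

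For the lower bound, the key observation is that the reduction in the proof of Theorem~\ref{thm:image} already encodes an \textbf{Iter-CVP} instance into the single output bit at automaton $R$ of the image configuration. I would therefore argue that a polynomial-time oracle for computing $\mmjoblock{f}{\mu}(x)$ decides \textbf{Iter-CVP}: given an instance, construct $f,\mu',x$ exactly as in that proof (in polynomial time), make one oracle call to obtain the full image $\mmjoblock{f}{\mu'}(x)$, and read off the bit at position $q_{k_n}+2n$; it is $\1$ precisely on positive instances. Since \textbf{Iter-CVP} is $\PSPACE$-complete and $\PSPACE$ is closed under polynomial-time Turing reductions, this establishes $\PSPACE$-hardness under Turing reductions. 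I would also note, as the statement's surrounding remark does, that many-one hardness is not the right notion here because a total functional problem has no negative instances to map to.

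The main subtlety — rather than a genuine obstacle — is being careful about the notion of reduction and the space bookkeeping. On the hardness side, one must phrase the result for polynomial-time Turing reductions, since the problem is functional and total; the single-oracle-call argument makes this transparent. On the membership side, the only point requiring care is confirming that iterating through $\ell>2^n$ substeps while keeping only the current configuration, a logarithmically-sized substep counter, and the o-block pointers genuinely stays within polynomial space, which follows directly from the in-place nature of each substep update together with the fact that $\ell$ admits a polynomial-size binary encoding. Both halves thus reduce to invoking constructions and membership arguments already present in the excerpt, so the corollary follows with little additional work.
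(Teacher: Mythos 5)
Your proposal is correct and matches the paper's intent: the corollary is stated immediately after Theorem~\ref{thm:image} with exactly this justification---membership via the same substep-by-substep, in-place polynomial-space simulation, and $\PSPACE$-hardness via a single-oracle-call polynomial-time Turing reduction that reuses the construction $f,\mu',x$ and reads off the bit of automaton $R$. The only wrinkle (inherited from the paper's own wording) is the position of $R$, which is $q_{k_n}+\ell'+n$ rather than $q_{k_n}+2n$ since $\ell'=\lceil\log_2(\ell)\rceil>n$, but this is immaterial to the argument.
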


Deciding whether a given configuration $y$ has a preimage through $\mmjoblock{f}{\mu}$ is also $\PSPACE$-complete.
The difficulty in this reduction is that we need to take into account the image of every configuration $x$.
We modify the preceding construction by setting automata $D$ to $\tilde{x}$
when the counter $B$ encodes $0$.\\[.5em]
\decisionpb{Block-parallel preimage}{BP-Preimage}
{$(f_i:\B^n\to\B)_{i\in\entiers{n}}$ as circuits, $\mu\in\BPn$, $y\in\B^n$.}
{does $\exists x\in\B^n:\mmjoblock{f}{\mu}(x)=y$?} 

\begin{theorem}\label{thm:preimage}
  {\normalfont \textbf{BP-Preimage}} is $\PSPACE$-complete.
\end{theorem}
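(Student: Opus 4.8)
The plan is to prove membership in $\PSPACE$ first, then $\PSPACE$-hardness by adapting the construction of Theorem~\ref{thm:image}, the subtlety being that the preimage question quantifies over \emph{all} configurations rather than fixing a single start.

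For membership, I would note that a preimage, if one exists, is a configuration in $\B^n$ of polynomial size, so \textbf{BP-Preimage} is solved by nondeterministically guessing a candidate $x$ and verifying $\mmjoblock{f}{\mu}(x)=y$ with the polynomial-space image procedure already described in the proof of Theorem~\ref{thm:image} (compute $\ell=\lcm$ of the o-block sizes, then run the $\ell$ substeps using one pointer per o-block, each substep evaluating local functions in polynomial time). Since $\mathsf{NPSPACE}=\PSPACE$ by Savitch's theorem, this places the problem in $\PSPACE$; equivalently, one may deterministically enumerate all configurations while reusing the same space.

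For hardness I would reduce from \textbf{Iter-CVP}, reusing the automata $P$ (constant $\0$, supplying $\ell=|\varphi(\mu_n)|>2^n$ substeps), the counter $B$, the circuit-iterating block $D$, and the recording bit $R$ of Theorem~\ref{thm:image}, together with the singletons added to $\mu_n$. Because I can no longer prescribe the initial values of $D$ and $R$, the fix (following the hint) is to make the local function of $D$ reset to $\tilde{x}$ whenever the counter reads $0$, in addition to its reset at $\ell-1$, and likewise to clear $R$ to $\0$ when the counter reads $0$; this renders the final values of $D$ and $R$ independent of their initial values. The structural fact I would exploit is that the counter $B$ is preserved across a full step — it increments modulo $\ell$ exactly $\ell$ times and returns to its starting value — so by fixing the target's $B$-component to $0$, I force every preimage to start with the counter at $0$, and no configuration with $B\neq 0$ can map to the target.

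With the counter pinned to $0$, I would argue that the substeps unfold canonically regardless of the remaining input bits: substep $0$ resets $D$ to $\tilde{x}$ and $R$ to $\0$, the subsequent substeps iterate $C$ so that $D$ runs through $\tilde{x},C(\tilde{x}),C^2(\tilde{x}),\dots$ while $R$ accumulates by disjunction the $i$-th bit of each, and the reset of $D$ at counter $\ell-1$ leaves it at the \emph{known} value $\tilde{x}$. Since $\ell>2^n$, the whole orbit of $\tilde{x}$ is explored, so the final configuration is $\0^{q_{k_n}}\0^{\ell'}\tilde{x}\1$ for a positive instance and $\0^{q_{k_n}}\0^{\ell'}\tilde{x}\0$ for a negative one, independently of the chosen preimage. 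Setting $y=\0^{q_{k_n}}\0^{\ell'}\tilde{x}\1$ then gives a preimage iff the \textbf{Iter-CVP} instance is positive (take any configuration with $B=0$). The main obstacle, and the entire reason for the counter-$0$ reset, is exactly this independence from the starting configuration: without resetting $D$ and $R$ the final state would depend on their arbitrary initial values and the reduction would collapse. A secondary subtlety is keeping $y$ polynomial-time computable, which is why $D$ is reset at the end to the known $\tilde{x}$ rather than being left at the uncomputable $C^{\ell-1}(\tilde{x})$. (Alternatively, keeping $R$ as a latch and targeting its $\0$-value yields a reduction from the complement of \textbf{Iter-CVP}, which suffices since $\PSPACE$ is closed under complement.)
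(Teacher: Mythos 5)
Your proposal is correct and takes essentially the same approach as the paper: the same $\PSPACE$ membership argument via the \textbf{BP-Step} image procedure, and the same reduction from \textbf{Iter-CVP} whose key idea is to reset automata $D$ and $R$ when the counter $B$ reads $0$ (making the image independent of the starting configuration) and to pin the target's counter component to $0$ so that any preimage must start with $B=0$. The only differences are cosmetic: you reset $D$ to $\tilde{x}$ and clear $R$ to $\0$ at counter $0$ and end with $D=\tilde{x}$, where the paper uses $C(\tilde{x})$, $\tilde{x}_i$, and a final value $\0^n$ for $D$.
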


\begin{proof}
  The algorithm for \textbf{BP-Preimage} computes the image of each configuration
  (enumerated in polynomial space with a simple counter) using the same procedure as \textbf{BP-Step}
  (Theorem~\ref{thm:image}),
  and decides whether there is some $x$ such that $\mmjoblock{f}{\mu}(x)=y$.

  \medskip

  Given an instance $C:\B^n\to\B^n$, $\tilde{x}\in\B^n$, $i\in\entiers{n}$ of \textbf{Iter-CVP},
  we construct the same block-parallel update schedule $\mu'$ as in the proof of Theorem~\ref{thm:image},
  and modify the local functions of automata $D$ and $R$ as follows:
  \[
    f_D(x)=
    \begin{cases}
      C(\tilde{x})&\text{if }x_B=0\\
      C(x_D)&\text{if }0<x_B<\ell-1\\
      \0^n&\text{otherwise}
    \end{cases}
    \qquad
    f_R(x)=
    \begin{cases}
      \tilde{x}_i&\text{if }x_B=0\\
      x_R\vee x_{q_{k_n}+\ell'+i}&\text{otherwise}
    \end{cases}
  \]
  The purpose is that $D$ iterates the circuit from $\tilde{x}$ when the counter is initialized to $0$,
  and that $R$ records whether the $i$-th bit of $D$ has been in state $\1$
  (including the initial substep).
  We set $y=\0^{q_{k_n}}\0^\ell\0^n\1$.

  If the \textbf{Iter-CVP} instance is positive,
  then we have $\mmjoblock{f}{\mu'}(\0^{q_{k_n}}\0^\ell\0^n\0)=y$
  (automata $B$ go back to $\0^{q_{k_n}}$,
  automata $D$ iterate circuit $C$ from $\tilde{x}$ and end in state $\0^n$,
  and automaton $R$ has recorded that the $i$-th bit of $D$ has been to state $\1$).

  Conversely, if there is a configuration $x$ such that $\mmjoblock{f}{\mu'}(x)=y$,
  then the automata from the counter $B$ must have started in state $x_B=\0^{q_{k_n}}$,
  because of the increment modulo $\ell$ which is the number of substeps.
  We deduce that $D$ iterate circuit $C$ for the whole orbit of $\tilde{x}$ and end in state $\0^n$,
  and that automaton $R$ records the answer to the \textbf{Iter-CVP} instance.
  Since it it ends in state $y_R=\1$ by our assumption that $\mmjoblock{f}{\mu'}(x)=y$,
  we conclude that it is positive.
\end{proof}

Now, we study the computational complexity of problems related to the existence
of fixed points and limit cycles in an automata network under block-parallel update schedule.
Again, we need to consider the image of all configurations,
and have no control on either the start configuration $x$ nor the end configuration $y$
during the dynamics of substeps.
In particular, the counter may be initialized to any value,
and the bit $R$ may already be set to $\1$.
We adapt the previous reductions accordingly.\\[.5em]
\decisionpb{Block-parallel fixed point}{BP-Fixed-Point}
{$(f_i:\B^n\to\B)_{i\in\entiers{n}}$ as circuits, $\mu\in\BPn$.}
{does $\exists x\in\B^n:\mmjoblock{f}{\mu}(x)=x$?}\\[.5em]
\decisionpb{Block-parallel limit cycle of length $k$}{BP-Limit-Cycle-$k$}
{$(f_i:\B^n\to\B)_{i\in\entiers{n}}$ as circuits, $\mu\in\BPn$.}
{does $\exists x\in\B^n:\mmjoblock{f}{\mu}^k(x)=x$?}\\[.5em]
\decisionpb{Block-parallel limit cycle}{BP-Limit-Cycle}
{$(f_i:\B^n\to\B)_{i\in\entiers{n}}$ as circuits, $\mu\in\BPn$, $k\in\N_+$.}
{does $\exists x\in\B^n:\mmjoblock{f}{\mu}^k(x)=x$?}\\[.5em]
On limit cycles we have a family of problems (one for each integer $k$),
and a version where $k$ is part of the input (encoded in binary).
It makes no difference on the complexity.

\begin{theorem}
  \label{thm:fixedpoint}
  {\normalfont \textbf{BP-Fixed-Point}},
  {\normalfont \textbf{BP-Limit-Cycle-$k$}} for any $k\in\N_+$
  and {\normalfont \textbf{BP-Limit-Cycle}} are $\PSPACE$-complete.
\end{theorem}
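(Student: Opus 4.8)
The three problems lie in $\PSPACE$ by the enumeration argument already used for Theorem~\ref{thm:image}: one step $\mmjoblock{f}{\mu}(x)$ is computed in polynomial space by advancing a pointer through the $\ell=\lcm$ of the o-block sizes many substeps, evaluating the circuits at each substep; to decide whether $\exists x$ with $\mmjoblock{f}{\mu}(x)=x$ (respectively $\mmjoblock{f}{\mu}^k(x)=x$) I enumerate all $x\in\B^n$ with a polynomial-size counter and, for each, iterate the step once (respectively $k$ times, counting the iterations with a binary counter that stays polynomial even when $k$ is part of the input). This all fits in polynomial space.

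For hardness I reduce from \textbf{Iter-CVP} and reuse the backbone of the proof of Theorem~\ref{thm:image}: the constant-$\0$ primes block $P$ forcing $\ell>2^n$ substeps, a binary counter $B$ incrementing modulo $\ell$, and a block $D$ of $n$ automata iterating $C$ with a reset to $\tilde x$ at the substep where $x_B=\ell-1$. The new difficulty, flagged above, is that a candidate periodic point may carry an arbitrary counter phase $b=x_B$ and arbitrary $D$-contents. The pivotal observation is that on any genuine cycle these degrees of freedom collapse: since $D$ is reset once per step, its value after one step is independent of its initial content and is determined solely by the phase $b$; hence on a cycle $D$ is frozen at that phase value, and with this consistent content the $\ell$ substeps of a single step make the content of $D$ run through $C^0(\tilde x),\dots,C^{\ell-1}(\tilde x)$ up to a cyclic shift. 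As $\ell>2^n$, this exhausts the orbit of $\tilde x$, so the \textbf{Iter-CVP} verdict is well defined along every cycle, whatever its phase.

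The verdict is turned into a cycle-length obstruction by replacing the single recording bit by an accumulator bit $A$ and a small counter $E$ on $\lceil\log_2(k{+}1)\rceil$ bits. The bit $A$ is reset at $x_B=\ell-1$ and otherwise ORs the $i$-th bit of $D$; at the substep $x_B=\ell-1$ the counter $E$ is incremented modulo $k+1$ exactly when $A\vee(x_D)_i=\0$, i.e.\ when no $\1$ was recorded along the whole orbit. If the \textbf{Iter-CVP} instance is positive, then the configuration $P=\0$, $x_B=0$, $x_D=\tilde x$, $x_A=\0$, $x_E=0$ is a genuine fixed point (a $\1$ is seen, so $E$ never increments and every component returns), hence a solution of $\mmjoblock{f}{\mu}^k(x)=x$ for every $k$. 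If the instance is negative, then along any cycle $A$ stays $\0$, so $E$ advances by exactly one at each step; its period is therefore exactly $k+1$, and since $k+1\nmid k$ no configuration can satisfy $\mmjoblock{f}{\mu}^k(x)=x$. Taking $k=1$ collapses $E$ to a single flipping bit and yields \textbf{BP-Fixed-Point}; encoding $k$ in binary keeps $E$ of polynomial size for the input-$k$ variant, so all reductions run in polynomial time.

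The main obstacle is to make the two preceding directions watertight against configurations we cannot control. Concretely I must (i) prove the freezing claim — on any cycle the phase $b$ is invariant, $D$ and $A$ are forced to their steady values, and $E$ consequently reads the true orbit verdict — and (ii) exclude spurious short cycles hidden in out-of-range counter values. For (ii) I send every $x$ with $x_B\geq\ell$ (respectively $x_E\geq k+1$) to $0$ in one substep, so that such configurations are purely transient and cannot lie on any cycle, leaving the clean modular counters to govern the entire recurrent dynamics. I also expect the phasing of the accumulator to require care, so that the readout at $x_B=\ell-1$ includes the last orbit point (whence the term $(x_D)_i$ in the increment condition). The only genuinely quantitative point is then the uniform separation $k+1\nmid k$, which is what distinguishes the negative case from the positive one for all $k$ at once, including the fixed-point case $k=1$.
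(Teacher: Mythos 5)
Your proof is correct. Its skeleton coincides with the paper's: the same $\PSPACE$ membership argument, a reduction from \textbf{Iter-CVP} reusing the backbone of Theorem~\ref{thm:image} (primes block $P$, counter $B$, block $D$ iterating $C$ with one reset per counter period), and the key observation---implicit in the paper's converse direction, explicit in yours---that on any cycle the counter phase is invariant while $D$ and the recording components are frozen at phase-determined values, so the \textbf{Iter-CVP} verdict is read correctly along every cycle. Where you genuinely diverge is the gadget turning the verdict into a length obstruction. The paper widens $B$ to $\lceil\log_2(k\ell)\rceil$ bits so that its period becomes exactly $k$ steps, and keeps a single bit $R$ that flips unconditionally at the unique substep with $x_B=\ell-1$ and ORs $(x_D)_i$ otherwise: in the negative case $R$ cannot return, while in the positive case the configurations $(\0^{q_{k_n}}B_i\tilde{x}\0)_{i\in\entiers{k}}$ form a limit cycle of length exactly $k$, and no cycle of length smaller than $k$ exists at all. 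You instead keep $B$ at period one step and add the pair $A$, $E$, so that negative instances have all cycle lengths divisible by $k+1$ (hence none dividing $k$), while positive instances have fixed points. Both arguments are watertight, and yours is pleasantly uniform: one construction covers all three problems, with $k=1$ collapsing to the fixed-point case. What the paper's version buys in exchange is the remark immediately following the theorem: hardness also holds for limit cycles of \emph{exact} length $k$ (all $k$ configurations distinct). Your construction cannot serve for that variant, since your positive witnesses are fixed points rather than $k$-cycles; recovering it would require something like the paper's long counter to force the positive-case witness to have period exactly $k$.
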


\begin{proof}
  These problems still belong to $\PSPACE$, because they amount to enumerating configurations
  and computing images by $\mmjoblock{f}{\mu}$,
  which can be done using the procedure for \textbf{BP-Step}
  (Theorem~\ref{thm:image}).

  \medskip

  We start with the hardness proof for the fixed point existence problem,
  and we will then adapt it to limit cycle existence problems.
  Given an instance $C:\B^n\to\B^n$, $\tilde{x}\in\B^n$, $i\in\entiers{n}$ of \textbf{Iter-CVP},
  we construct the same block-parallel update schedule $\mu'$ as in the proof of Theorem~\ref{thm:image},
  and modify the local functions of automata $B$ and $R$ as follows:
  \begin{itemize}
    \item automata $B$ increment a counter modulo $\ell$ at each substep,
      and go to $0$ when the counter is greater than (or equal to) $\ell-1$; and
    \item automaton $R$ records whether a state $\1$ appears at the $i$-th bit of $x_D$,
      and flips when the counter is equal to $\ell-1$, i.e.,
      \[
        f_R(x)=
        \begin{cases}
          x_R\vee x_{q_{k_n}+\ell'+i}&\text{if }x_B<\ell-1\text{,}\\
          \neg x_R&\text{otherwise.}
        \end{cases}
      \]
  \end{itemize}
  Recall that automata $D$ iterate the circuit when $x_B<\ell-1$ and go to $\tilde{x}$ otherwise,
  and that the number $\ell$ of substeps is larger than $2^n$.

  If the \textbf{Iter-CVP} instance is positive, then configuration
  $x=\0^{q_{k_n}}\0^{\ell'}\tilde{x}\0$ is a fixed point of $\mmjoblock{f}{\mu'}$.
  Indeed, during the $\ell$-th and last substep,
  the primes $P$ are still in state $\0^{q_{k_n}}$,
  the counter $B$ goes back to $0$ (state $\0^{\ell'}$),
  the circuit $D$ goes back to $\tilde{x}$,
  and automaton $R$ has recorded the $\1$ which is flipped into state $\0$.

  Conversely, if there is a fixed point configuration $x$,
  then the counter must be at most $\ell-1$ because of the modulo $\ell$ increment.
  Furthermore, automata $D$ will encounter one substep during which it goes to $\tilde{x}$,
  hence the resulting configuration on $D$ will be in the orbit of $\tilde{x}$,
  i.e., $x_D$ is in the orbit of $\tilde{x}$.
  Finally, automaton $R$ will also encounter exactly one substep during which it
  is flipped (when $x_B\geq \ell-1$).
  As a consequence, in order to go back to its initial value $x_R$,
  the state of $R$ must be flipped during another substep,
  which can only happen when it is in state $\0$ and automaton $q_{k_n}+\ell'+i$
  is in state $\1$.
  We conclude that the $i$-th bit of a configuration in the orbit of $\tilde{x}$
  is in state $\1$ during some iteration of the circuit $C$,
  meaning that the \textbf{Iter-CVP} instance is positive.
  Remark that in this case, configuration $\0^{q_{k_n}}\0^{\ell}\tilde{x}\0$ is one of the fixed points.

  \medskip

  For the limit cycle existence problems,
  we modify the construction to let the counter go up to $k\ell-1$.
  Precisely:
  \begin{itemize}
    \item $\ell'=\lceil\log_2(k\ell)\rceil$ automata $B$ implement a binary counter
      which is incremented at each substep, and goes to $0$ when $x_B\geq k\ell-1$;
    \item $n$ automata $D$ iterate the circuit $C$ when $x_B<\ell-1$,
      and go to state $\tilde{x}$ otherwise (no change); and
    \item $1$ automaton $R$ records whether a state $\1$ appears in the $i$-th bit of $x_D$,
      and flips when the counter is equal to $\ell-1$.
  \end{itemize}
  The reasoning is identical to the case $k=1$, except that the counter needs $k$ times $\ell$ substeps,
  i.e., $k$ steps, in order to go back to its initial value.
  As a consequence there is no $x$ and $k'<k$ such that $\mmjoblock{f}{\mu}^{k'}(x)=x$,
  and the dynamics has no limit cycle of length smaller than $k$.
  Remark that when the \textbf{Iter-CVP} instance is positive,
  configurations $(\0^{q_{k_n}}B_i\tilde{x}\0)_{i\in\entiers{k}}$
  with $B_i$ the $\ell'$-bits encoding of $i\ell$
  form one of the limit cycles of length $k$.
  Also remark that the encoding of $k$ in binary within the input has no consequence,
  neither on the $\PSPACE$ algorithm, nor on the polynomial time many-one reduction.
\end{proof}

Remark that our construction also applies to the notion of limit cycle $x^0,\dots,x^{p-1}$
where it is furthermore required that all configurations are different
(this corresponds to having the minimum length $p$): the problem is still $\PSPACE$-complete.

\subsection{Reachability and general complexity bounds}
\label{ss:reach}

In this part, we settle the computational complexity of the classical reachability problem,
which is unsurprisingly still $\PSPACE$-hard by reduction from another model of computation.
In light of what precedes, one may be inclined to think that any problem
related to the dynamics of automata networks under block-parallel update schedules is $\PSPACE$-hard.
We prove that this is partly true with a general complexity bound theorem
on subdynamics existing within $\mmjoblock{f}{\mu}$, based on our previous results on fixed points and limit cycles.
However, we will also prove that a Rice-like complexity lower bound analogous
to the main results of~\cite{C-Gamard2021}, i.e., which would state that
any non-trivial question on the dynamics
(on the functional graph of $\mmjoblock{f}{\mu}$)
expressible in first order logics
is $\PSPACE$-hard, does not hold
(unless a collapse of $\PSPACE$ to the first level of the polynomial hierarchy).
Indeed, we will see that deciding the bijectivity ($\forall x,y\in\B^n:\mmjoblock{f}{\mu}(x)=\mmjoblock{f}{\mu}(y)\implies x=y$) is complete for $\coNP$.
We conclude the section with a discussion on reversible dynamics.\\[.5em]
\decisionpb{Block-parallel reachability}{BP-Reachability}
{$(f_i:\B^n\to\B)_{i\in\entiers{n}}$ as circuits, $\mu\in\BPn$, $x,y\in\B^n$.}
{does $\exists t\in\N:\mmjoblock{f}{\mu}^t(x)=y$?}

\begin{theorem}
  \label{thm:reach}
  {\normalfont \textbf{BP-Reachability}} is $\PSPACE$-complete.
\end{theorem}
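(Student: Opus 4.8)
The plan is to establish membership in $\PSPACE$ by a bounded-iteration search, and to obtain $\PSPACE$-hardness by a direct reduction from \textbf{Iter-CVP} that recycles the construction of Theorem~\ref{thm:image} without building any new gadget.

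For membership, let $N$ denote the total number of automata of $f$, so that the configuration space has size $2^N$. The orbit of $x$ under $\mmjoblock{f}{\mu}$ is ultimately periodic, with transient plus period bounded by $2^N$; hence $y$ is reachable from $x$ if and only if it appears among the first $2^N$ iterates. I would therefore iterate $\mmjoblock{f}{\mu}$ starting from $x$, storing only the current configuration together with an $N$-bit counter, and accept as soon as the current configuration equals $y$ (rejecting once the counter exhausts $2^N$). Each application of $\mmjoblock{f}{\mu}$ is computable in polynomial space by the procedure of Theorem~\ref{thm:image} (compute $\ell=\lcm$ of the o-block sizes and run the substeps with one pointer per block), so the whole search runs in polynomial space, giving \textbf{BP-Reachability} $\in\PSPACE$.

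For hardness, I would reuse verbatim the BAN $f$ and schedule $\mu'$ built in the proof of Theorem~\ref{thm:image} from an \textbf{Iter-CVP} instance $(C,\tilde{x},i)$. Recall that on the configuration $x=\0^{q_{k_n}}\0^{\ell'}\tilde{x}\0$ that construction satisfies $\mmjoblock{f}{\mu'}(x)=y_-=x$ when the instance is negative and $\mmjoblock{f}{\mu'}(x)=y_+=\0^{q_{k_n}}\0^{\ell'}\tilde{x}\1$ when it is positive. I would set the source to $x$ and the target to $y=y_+$. In the positive case $y$ is reached at time $t=1$. In the negative case $x$ is a fixed point, so its entire orbit is $\{x\}$, and since $y_+\neq x$ (they differ on the recording bit $R$) the target is never reached. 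Thus $y$ is reachable from $x$ if and only if the \textbf{Iter-CVP} instance is positive, and the reduction is computed in polynomial time exactly as in Theorem~\ref{thm:image}.

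The one point requiring care, and the crux of the argument, is that in the negative case the dynamics genuinely cannot escape to $y_+$: this is guaranteed because the construction forces $x$ to be an actual fixed point there, which collapses the long-range reachability question about iterating $\mmjoblock{f}{\mu'}$ into the single-step behaviour already analysed in Theorem~\ref{thm:image}. Once this observation is made, the reduction is immediate, and no additional construction is needed beyond choosing the right source and target configurations.
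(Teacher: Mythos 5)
Your proof is correct, but the hardness half takes a genuinely different route from the paper's. The membership argument is the same in both: iterate $\mmjoblock{f}{\mu}$ from $x$ for at most $2^N$ steps, each step computable in polynomial space by the procedure of Theorem~\ref{thm:image}. For hardness, however, the paper does not touch the \textbf{Iter-CVP} gadget at all: it reduces from the reachability problem for reaction systems, known to be $\PSPACE$-complete~\cite{J-Dennunzio2019}, encoding each reaction $(R,I,P)$ as a disjunction of conjunctions and running the resulting BAN under the \emph{parallel} schedule $\mu_\texttt{par}$, which is a particular block-parallel schedule. Your reduction instead recycles the construction of Theorem~\ref{thm:image} with source $x=\0^{q_{k_n}}\0^{\ell'}\tilde{x}\0$ and target $y_+$, and the step you single out as the crux is indeed the one that makes it work: in the negative case that construction yields $\mmjoblock{f}{\mu'}(x)=y_-=x$ exactly, so $x$ is a fixed point, the orbit is $\{x\}$, and $y_+\neq x$ is never reached --- this is what legitimately collapses the many-step reachability question to the single-step analysis (a generic reduction from \textbf{BP-Step} would not work, since $y$ could a priori be reached at some $t>1$). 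The trade-off between the two approaches: yours is self-contained given the earlier construction and localizes all the hardness inside one step, exploiting the exponential number of substeps; the paper's shows something complementary and arguably stronger, namely that reachability is $\PSPACE$-hard already under the plain parallel schedule, where each single step is polynomial-time computable --- so for this particular problem the hardness is not a consequence of block-parallelism but is inherited from the classical hardness of iterated dynamics.
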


\begin{proof}
  The problem belongs to $\PSPACE$, because is can naively be solved by simulating
  the dynamics of $\mmjoblock{f}{\mu}$ starting from configuration $x$, for $2^n$ time steps.

  \medskip

  Reachability problems in cellular automata and related models are known to be $\PSPACE$-complete
  on finite configurations~\cite{J-Sutner1995}.
  We reduce from the reachability problem for reaction systems,
  which can be seen as a particular case of Boolean automata networks,
  and is also known to be $\PSPACE$-complete~\cite{J-Dennunzio2019}.
  Given a reaction system $(S,A)$ where $S$ is a finite set of entities,
  and $A$ is a set of reactions of the form $(R,I,P)$
  where $R$ are the reactants, $I$ the inhibitors and $P$ the products,
  we construct the BAN of size $n=|S|$ with local functions:
  \[
    \forall i\in\entiers{n}: f_i(x)=\bigvee_{\substack{(R,I,P)\in A\\\text{such that }i\in P}}
    \left( \bigwedge_{j\in R}x_j \wedge \bigwedge_{k\in I}\neg x_k \right).
  \]
  A configuration $x\in\B^n$ of the BAN corresponds to a state of the reaction system with
  each automaton indicating the presence or absence of its corresponding entity.
  The parallel evolution of $f$ (under $\mu_\texttt{par}$) is in direct correspondance with the evolution
  of the reaction system.
\end{proof}

From the fixed point and limit cycle theorems in Section~\ref{ss:image},
we now derive that any particular subdynamics is hard to identify within $\mmjoblock{f}{\mu}$
under block-parallel update schedule.
A functional graph is a directed graph of out-degree exactly one,
and we assimilate $\mmjoblock{f}{\mu}$ to its functional graph.
We define a family of problems, one for each functional graph $G$
to find as a subgraph of $\mmjoblock{f}{\mu}$,
and prove that the problem is always $\PSPACE$-hard.
Since $\PSPACE=\coPSPACE$, checking the existence of a subdynamics
is as hard as checking the absence of a subdynamics,
even though the former is a local property whereas the latter is a global property
at the dynamics scale. This is understandable in regard of the fact that
$\PSPACE$ scales everything to the global level
(one can search the whole dynamics in $\PSPACE$),
because verifying that a given set of configurations (a certificate)
gives the subgraph $G$ is difficult (Theorem~\ref{thm:image}).\\[.5em]
\decisionpb{Block-parallel $G$ as subdynamics}{BP-Subdynamics-$G$}
{$(f_i:\B^n\to\B)_{i\in\entiers{n}}$ as circuits, $\mu\in\BPn$.}
{does $G\sqsubset\mmjoblock{f}{\mu}$?}\\[.5em]
Remark that asking whether $G$ appears as a subgraph or as an induced subgraph
makes no difference when $G$ is functional (has out-degree exactly one),
because $\mmjoblock{f}{\mu}$ is also a functional graph:
it is necessarily induced since there is no arc to delete.

\begin{theorem}\label{thm:subdynamics}
  {\normalfont \textbf{BP-Subdynamics-$G$}} is $\PSPACE$-complete for any functional graph $G$.
\end{theorem}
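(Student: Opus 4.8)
The plan is to settle membership first and then give a polynomial-time many-one reduction from \textbf{Iter-CVP}, reusing the limit cycle gadget of Theorem~\ref{thm:fixedpoint}. For membership, I would exploit that both $G$ and $\mmjoblock{f}{\mu}$ are functional graphs: writing $\sigma_G$ for the (everywhere-defined) successor function of $G$, the relation $G\sqsubset\mmjoblock{f}{\mu}$ is equivalent to the existence of an injective map $\phi:V(G)\to\B^n$ with $\mmjoblock{f}{\mu}(\phi(u))=\phi(\sigma_G(u))$ for every vertex $u$. Since $G$ is fixed, $|V(G)|$ is a constant, so one can enumerate all $|V(G)|$-tuples of configurations with a polynomially bounded counter, and for each candidate check injectivity and the required image equalities using the polynomial-space image procedure of Theorem~\ref{thm:image}, reusing the working space between candidates. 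This places \textbf{BP-Subdynamics-$G$} in $\PSPACE$.

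For hardness I would reduce from an instance $C:\B^n\to\B^n$, $\tilde x\in\B^n$, $i\in\entiers{n}$ of \textbf{Iter-CVP}, keeping the prime/counter gadget $(g_n,\mu_n)$ of Lemma~\ref{lem:gn} that forces $\ell=|\varphi(\mu_n)|>2^n$ substeps per step, together with the counter automata $B$, the circuit-simulating automata $D$, and the recorder automaton $R$ of the proof of Theorem~\ref{thm:fixedpoint}, so that one step of the network runs $C$ over the entire orbit of $\tilde x$ and $R$ certifies positivity. To this I add $\lceil\log_2|V(G)|\rceil$ \emph{selector} automata $S$ encoding a vertex of $G$ (plus a constant number of auxiliary bits). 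Because $G$ is a fixed object, $\sigma_G$ is hardwired into the local functions: during the last substep of each step, the selector is updated to $\sigma_G(x_S)$ when $R$ certifies a positive instance, and is diverted to a distinguished sink value otherwise.

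For correctness, when the instance is positive the $|V(G)|$ designated configurations carrying the canonical values on $P$, $B$, $D$, $R$ and ranging over all vertices in $S$ satisfy $\mmjoblock{f}{\mu'}(\text{config of }u)=\text{config of }\sigma_G(u)$, which is exactly an injective homomorphism of functional graphs and hence gives $G\sqsubset\mmjoblock{f}{\mu'}$. The transient (tree) part of $G$ is reproduced for free: several designated configurations differing only in $S$ may share the same $\sigma_G$-image, and since the embedding need only realize the arcs of $G$ (recall subgraph and induced subgraph coincide here), no constraint on in-degrees arises. When the instance is negative, the selector never advances along $\sigma_G$ but is sent to the sink, so none of the cycles of $G$ is realized among the designated configurations.

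The main obstacle is precisely the negative case, which is a \emph{global} requirement: since the reduction cannot control the starting configuration, the counter $B$ may be initialized arbitrarily and $R$ may be preset, exactly as in Theorem~\ref{thm:fixedpoint}, and I must rule out a copy of $G$ arising from \emph{any} configuration, not only the designated ones. The plan is to leverage the two invariants already exploited for limit cycles — the counter is step-invariant (it increments $\ell$ times modulo $\ell$) and the recorder flips exactly once per step unless a positive certificate cancels it — to funnel every cyclic configuration of $\mmjoblock{f}{\mu'}$ into a controlled $G$-free region, namely one whose only cycles have lengths not occurring among the (finitely many, constant) cycle lengths of $G$, so that embedding any cyclic component of $G$ becomes impossible. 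Proving that this region admits no embedding of $G$, while keeping the construction polynomial time, is the real technical burden; $\coPSPACE$-hardness of detecting the absence of $G$ then follows at once from $\PSPACE=\coPSPACE$.
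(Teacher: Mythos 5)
Your $\PSPACE$ membership argument is fine and essentially the paper's (enumerate candidate vertex sets of constant size $|V(G)|$, verify the arcs with the polynomial-space image procedure of Theorem~\ref{thm:image}). The hardness direction, however, has a genuine gap, and it is not merely the ``technical burden'' you defer at the end: the construction as described fails for some fixed graphs $G$. Your selector trusts a certificate that an arbitrary starting configuration can spoof, and the underlying gadget necessarily retains short cycles in the negative case. Concretely, take a \emph{negative} \textbf{Iter-CVP} instance and the configuration with counter $0$, $x_D=\tilde{x}$, $x_R=\1$, selector at the sink. During the step, $R$ keeps the value $\1$ (its local function is a disjunction with $x_R$), so at the last substep the selector sees $R=\1$, ``certifies positivity'', and maps the sink to itself; meanwhile $R$ is flipped to $\0$, and one step later it is flipped back to $\1$ (in a negative instance no recording occurs). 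Hence the host dynamics contains a limit cycle of length $2$ regardless of the answer to the instance. If $G$ is, say, a single $2$-cycle, your reduction therefore sends negative instances to positive ones. The same phenomenon dooms your proposed repair: the cycle lengths surviving in the negative case depend on the input circuit $C$ and on spoofed initial values, so they cannot be forced to avoid a prescribed finite set of lengths, and the $R$-flip that is needed to kill spurious fixed points inevitably manufactures $2$-cycles. A region that is ``$G$-free'' for every functional $G$ simultaneously is not achievable this way.

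The paper's proof avoids realizing all of $G$ dynamically, which is the idea you are missing. It chooses \emph{one} decisive element of $G$ --- a limit cycle of length $k\geq 2$, or a fixed point carrying a nonempty hanging tree --- and reduces via the \textbf{BP-Limit-Cycle-$k$} construction of Theorem~\ref{thm:fixedpoint}, whose counter runs modulo $k\ell$; this guarantees that in the negative case \emph{every} cycle of $\mmjoblock{f}{\mu}$ has length a multiple of $k$ strictly greater than $k$, so no spoofing can ever create a cycle of length exactly $k$ (nor a fixed point). The remainder $G'$, i.e., $G$ minus the chosen cycle, is then hard-wired \emph{statically}: one extra automaton doubles the space, the subspace $x_n=\1$ carries a copy of $G'$ whose dangling vertices $U$ are wired onto the potential limit cycle in the subspace $x_n=\0$, and all other configurations of that subspace are isolated fixed points. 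Correctness of the negative case then reduces to a counting deficiency concentrated on the decisive element: if $G$ has $m$ vertex-disjoint cycles of length $k$, the host has exactly $m-1$ of them (those of $G'$), so no injective embedding can exist; similarly for the fixed-point-with-tree case. The degenerate case where $G$ consists only of isolated fixed points is handled separately by taking disjoint copies of the \textbf{BP-Fixed-Point} reduction. To salvage your selector idea you would have to reintroduce precisely these two ingredients --- a counter modulus tied to one chosen cycle length of $G$, and a static encoding of the rest of $G$ --- at which point you have reconstructed the paper's argument.
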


\begin{proof}
  A polynomial space algorithm for \textbf{BP-$G$-Subdynamics} consists in
  enumerating all subsets $S\subseteq\B^n$ of size $|S|=|V(G)|$,
  and test for each whether the restriction of $\mmjoblock{f}{\mu}$ to $S$ is isomorphic to $G$
  (functional graphs are planar hence isomorphism can be decided in logarithmic space~\cite{C-Datta2010}).

  \medskip

  For the $\PSPACE$-hardness, the idea is to choose a fixed point or limit cycle in $G$,
  and make it the decisive element whose existence or not lets $G$ be a subgraph
  of the dynamics or not.
  Since $G$ is a functional graph, it is composed of fixed points and limit cycles,
  with hanging trees rooted into them (the trees are pointing towards their root).
  Let $G(v)$ denote the unique out-neighbor of $v\in V(G)$.

  Let us first assume that $G$ has a limit cycle of length $k\geq 2$,
  or a fixed point with a tree of height greater or equal to $1$ hanging
  (the case where $G$ has only isolated limit cycles is treated thereafter).
  A fixed point is assimilated to a limit cycle of length $k=1$.
  Let $G'$ be the graph $G$ without this limit cycle of size $k$,
  and let $U$ be the vertices of $G'$ without out-neighbor (if $k=1$ then $U\neq\emptyset$).
  We reduce from \textbf{Iter-CVP}, and first compute the $f,\mu$ of size $n$ obtained
  by the reduction from Theorem~\ref{thm:fixedpoint}
  for the problem \textbf{BP-Limit-Cycle-$k$}.
  We have that
  $\mmjoblock{f}{\mu}$ has a limit cycle of length $k$
  on configurations $(\0^{q_{k_n}}B_i\tilde{x}\0)_{i\in\entiers{k}}$
  (or configuration $\0^{q_{k_n}}\0^\ell\tilde{x}\0$ for $k=1$)
  if and only if the \textbf{Iter-CVP} instance is positive.

  We construct $g$ on $n+1$ automata,
  and the update schedule $\mu'$ being the union of $\mu$ with a singleton o-block for the new automaton.
  We assume that $n\geq |V(G)|-k$,
  otherwise we pad $f,\mu$ to that size (with identity local functions for the new automata).
  The idea is that $g$ will consist in a copy of $f$ on the subspace $x_n=\0$,
  and a copy of $G'$ on the subspace $x_n=\1$ where the images of the configurations
  corresponding to the vertices of $U$ will be configurations of the potential limit cycle
  of $\mmjoblock{f}{\mu}$ (in the other subspace $x_n=\0$).
  Other configurations in the subspace $x_n=\1$ will be fixed points.
  Figure~\ref{fig:subdynamics} illustrates the construction.
  Recall that $G$ is fixed,
  and consider a mapping $\alpha:V(G)\to\bool^n$
  such that vertices of the limit cycle of length $k$ are sent
  to the configurations $(\0^{q_{k_n}}B_i\tilde{x}\0)_{i\in\entiers{k}}$ respectively
  (or $\0^{q_{k_n}}\0^\ell\tilde{x}\0$ for $k=1$).
  We define:
  \[
    g(x)=\begin{cases}
      f(x_{\entiers{n}})\0 & \text{if } x_n=0\text{,}\\
      \alpha(G(v))\0 & \text{if } x_n=1 \text{ and } \exists v\in U:\alpha(v)=x_{\entiers{n}}\text{,}\\
      \alpha(G(v))\1 & \text{if } x_n=1 \text{ and } \exists v\in G'\setminus U:\alpha(v)=x_{\entiers{n}}\text{,}\\
      x & \text{otherwise.}
    \end{cases}
  \]
  \begin{figure}
    \centering
    \includegraphics{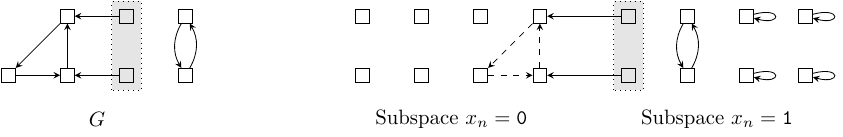}
    \caption{
      Construction of $g$ in the proof of Theorem~\ref{thm:subdynamics}.
      Subspace $x_n=\0$ contains a copy of $f$
      with a potential limit cycle dashed.
      Subspace $x_n=\1$ implements $G'$, and wires configurations of $U$ (grey area)
      to the potential limit cycle in the copy of $f$
      (remaining configurations are fixed points).
    }
    \label{fig:subdynamics}
  \end{figure}
  The obtained dynamics $\mmjoblock{g}{\mu'}$ has one copy of $\mmjoblock{f}{\mu}$
  (on the subspace $x_n=\0$), with a copy of $G'$ (in the subspace $x_n=\1$)
  which becomes a copy of $G$ if configurations
  $(\0^{q_{k_n}}B_i\tilde{x}\0)_{i\in\entiers{k}}$ (or $\0^{q_{k_n}}\0^\ell\tilde{x}\0$ in the case $k=1$)
  form a limit cycle of length $k$.
  Moreover, it becomes a copy of $G$ only if so by our assumption on the limit cycle or fixed point of $G$,
  because the remaining configurations in the subspace $x_n=\1$ are all isolated fixed points.
  This concludes the reduction.

  For the case where $G$ is made of $k$ isolated fixed points,
  we reduce from \textbf{BP-Fixed-Point} and construct an automata network with
  $k$ copies of the dynamics of $f$,
  by adding $\lceil\log_2(k)\rceil$ automata with identity local functions.
\end{proof}

When the property of being a functional graph is dropped,
that is when the out-degree of $G$ is at most one
(otherwise any instance is trivially negative),
problem \textbf{BP-Subdynamics-$G$} is subtler.
Indeed, one can still ask for the existence of fixed points, limit cycles
and any functional subdynamics $\PSPACE$-complete by Theorem~\ref{thm:subdynamics},
but new problems arise, some of which are provably complete only for $\coNP$.
The symmetry of existence versus non existence is broken.
We list some example graphs and the associated properties,
some of which require to forbid two subdynamics.
As mentioned above, non existence of subgraphs leads more naturally
to standard global properties of the dynamics.
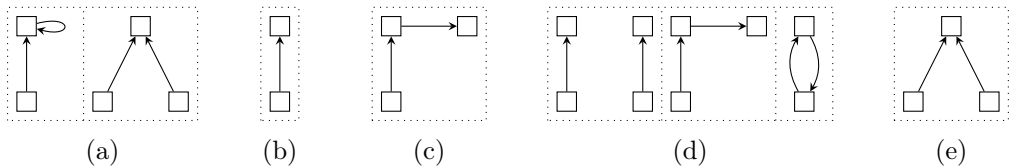
\begin{figure}[h!]
  \centering
  \tikzstyle{node} = [rectangle, draw]
  \tikzstyle{arc} = [-stealth]
  \begin{subfigure}{.2\textwidth}
    \centering
    \begin{tikzpicture}
      \draw[dotted] (-.25,-.25) rectangle (2.25,1.25);
      \draw[dotted] (.75,-.25) -- (.75,1.25);
      \node[node] (n0) at (0,0) {};
      \node[node] (n1) at (0,1) {};
      \draw[arc] (n0) to (n1);
      \draw[arc,loop right] (n1) to (n1);
      \node[node] (nn0) at (1,0) {};
      \node[node] (nn1) at (2,0) {};
      \node[node] (nn2) at (1.5,1) {};
      \draw[arc] (nn0) to (nn2);
      \draw[arc] (nn1) to (nn2);
    \end{tikzpicture}
    \caption{}
    \label{fig:Gsub-bij}
  \end{subfigure}
  \begin{subfigure}{.1\textwidth}
    \centering
    \begin{tikzpicture}
      \draw[dotted] (-.25,-.25) rectangle (.25,1.25);
      \node[node] (n0) at (0,0) {};
      \node[node] (n1) at (0,1) {};
      \draw[arc] (n0) to (n1);
    \end{tikzpicture}
    \caption{}
    \label{fig:Gsub-id}
  \end{subfigure}
  \begin{subfigure}{.15\textwidth}
    \centering
    \begin{tikzpicture}
      \draw[dotted] (-.25,-.25) rectangle (1.25,1.25);
      \node[node] (n0) at (0,0) {};
      \node[node] (n1) at (0,1) {};
      \node[node] (n2) at (1,1) {};
      \draw[arc] (n0) to (n1);
      \draw[arc] (n1) to (n2);
    \end{tikzpicture}
    \caption{}
    \label{fig:Gsub-id2}
  \end{subfigure}
  \begin{subfigure}{.3\textwidth}
    \centering
    \begin{tikzpicture}
      \draw[dotted] (-.25,-.25) rectangle (3.5,1.25);
      \draw[dotted] (1.25,-.25) -- (1.25,1.25);
      \draw[dotted] (2.75,-.25) -- (2.75,1.25);
      \node[node] (n0) at (0,0) {};
      \node[node] (n1) at (0,1) {};
      \node[node] (n2) at (1,0) {};
      \node[node] (n3) at (1,1) {};
      \node[node] (n4) at (1.5,0) {};
      \draw[arc] (n0) to (n1);
      \draw[arc] (n2) to (n3);
      \node[node] (n5) at (1.5,1) {};
      \node[node] (n6) at (2.5,1) {};
      \draw[arc] (n4) to (n5);
      \draw[arc] (n5) to (n6);
      \node[node] (n7) at (3.125,0) {};
      \node[node] (n8) at (3.125,1) {};
      \draw[arc] (n7) to[bend left] (n8);
      \draw[arc] (n8) to[bend left] (n7);
    \end{tikzpicture}
    \caption{}
    \label{fig:Gsub-cst}
  \end{subfigure}
  \begin{subfigure}{.15\textwidth}
    \centering
    \begin{tikzpicture}
      \draw[dotted] (-.25,-.25) rectangle (1.25,1.25);
      \node[node] (n0) at (0,0) {};
      \node[node] (n1) at (1,0) {};
      \node[node] (n2) at (.5,1) {};
      \draw[arc] (n0) to (n2);
      \draw[arc] (n1) to (n2);
    \end{tikzpicture}
    \caption{}
    \label{fig:Gsub-plouf}
  \end{subfigure}
  \caption{Example graphs of out-degree at most one.}
\end{figure}
\begin{enumerate}[label=(\alph*)]
  \item The \emph{non existence} of any of the two graphs from Figure~\ref{fig:Gsub-bij}
    means that the dynamics is bijective (it has only fixed points and limit cycles with no hanging tree),
    expressed in first order as $\forall x,y:\mmjoblock{f}{\mu}(x)=\mmjoblock{f}{\mu}(y)\implies x=y$
    (the space is finite, thus injectivity implies bijectivity).
  \item The \emph{non existence} of the graph from Figure~\ref{fig:Gsub-id}
    means that the dynamics is the identity map on $\B^n$,
    expressed in first order as $\forall x:\mmjoblock{f}{\mu}(x)=x$.
  \item The \emph{non existence} of the graph from Figure~\ref{fig:Gsub-id2}
    means that the dynamics is a union of stars and limit cycles of length two,
    expressed in first order as $\forall x,y,z:\mmjoblock{f}{\mu}(x)\neq y \vee \mmjoblock{f}{\mu}(y)\neq z$.
  \item The \emph{non existence} of any of the three graphs from Figure~\ref{fig:Gsub-cst}
    means that the dynamics is a constant map on $\B^n$,
    expressed in first order as $\forall x,y:\mmjoblock{f}{\mu}(x)=\mmjoblock{f}{\mu}(y)$.
  \item The \emph{non existence} of the graph from Figure~\ref{fig:Gsub-plouf} alone
    is less standard: the dynamics would be a collection of limit cycles,
    and paths leading to fixed points.
\end{enumerate}

\medskip

In what follows, we settle that deciding the bijectivity of $\mmjoblock{f}{\mu}$ is $\coNP$-complete,
and then discuss the complexity of decision problems which are subsets of bijective networks,
such as the problem of deciding whether $\mmjoblock{f}{\mu}$ is the identity.
We conclude the section by proving that it is nevertheless $\PSPACE$-complete to decide whether
$\mmjoblock{f}{\mu}$ is a constant map.
These results hint at the subtleties behind a full characterization of the computational complexity
of \textbf{BP-Subdynamics-$G$} for all graphs of out-degree at most one.\\[.5em]
\decisionpb{Block-parallel bijectivity}{BP-Bijectivity}
{$(f_i:\B^n\to\B)_{i\in\entiers{n}}$ as circuits, $\mu\in\BPn$.}
{is $\mmjoblock{f}{\mu}$ bijective?}\\[.5em]
Remark that, because the space of configurations is finite,
injectivity, surjectivity and bijectivity are equivalent properties of $\mmjoblock{f}{\mu}$.

\begin{lemma}\label{lem:bij}
  Let $f:\B^n\to\B^n$ a BAN and $\mu\in\BPn$ a block-parallel update mode.
  Then $\mmjoblock{f}{\mu}$ is bijective if and only if
  $\mmjblock{f}{W}$ is bijective for every block $W$ of $\varphi(\mu)$.
\end{lemma}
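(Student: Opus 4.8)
The plan is to reduce the statement to an elementary fact about compositions of self-maps of a finite set. By definition of $\varphi$, writing $\varphi(\mu)=\Wl$ with $\ell=|\varphi(\mu)|$, the dynamics factors as
\[
  \mmjoblock{f}{\mu}=\mmjblock{f}{W_{\ell-1}}\circ\cdots\circ\mmjblock{f}{W_1}\circ\mmjblock{f}{W_0},
\]
a composition of the block updates, each of which is a map $\B^n\to\B^n$ on the finite set $\B^n$. So it suffices to show: a composition $g_{\ell-1}\circ\cdots\circ g_0$ of self-maps of a finite set is bijective if and only if every factor $g_i$ is bijective.

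The ``if'' direction is immediate and needs no finiteness: a composition of bijections is a bijection, so if each $\mmjblock{f}{W_i}$ is bijective then $\mmjoblock{f}{\mu}$ is too. For the converse, which is the main point, I would argue by tracking image cardinalities. Set $Y_0=\B^n$ and $Y_{j+1}=\mmjblock{f}{W_j}(Y_j)$ for $0\le j<\ell$, so that $Y_\ell=\mmjoblock{f}{\mu}(\B^n)$. Since applying a map never increases cardinality, $|Y_0|\ge|Y_1|\ge\cdots\ge|Y_\ell|$. If $\mmjoblock{f}{\mu}$ is bijective (equivalently surjective, the space being finite), then $|Y_\ell|=|\B^n|=|Y_0|$, which forces every inequality to be an equality. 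As each $Y_j\subseteq\B^n$ has $|Y_j|=|\B^n|$, we get $Y_j=\B^n$ for all $j$; hence each $\mmjblock{f}{W_j}$ sends $\B^n$ onto $Y_{j+1}=\B^n$, i.e.\ it is a surjection of the finite set $\B^n$ onto itself, and is therefore bijective.

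The only delicate point is this converse, and specifically the reliance on finiteness: for self-maps of $\B^n$ injectivity, surjectivity and bijectivity coincide (as already noted just before the statement), and this is exactly what lets a merely surjective factor be upgraded to a bijective one. On an infinite domain the claim would fail, so finiteness of $\B^n$ is essential rather than cosmetic. Note also that no structural feature of the block updates $\mmjblock{f}{W}$ is used beyond their being self-maps of $\B^n$, so the argument is insensitive to the fact that $\varphi(\mu)$ may repeat automata among its blocks; an equivalent ``peel off the first factor and induct'' argument would give the same conclusion.
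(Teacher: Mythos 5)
Your proof is correct, and it takes a genuinely different route from the paper's. For the nontrivial direction, the paper argues by contrapositive with a minimality argument: it picks the \emph{first} block $W_\ell$ of $\varphi(\mu)$ whose update is not bijective, extracts a collision $\mmjblock{f}{W_\ell}(x)=\mmjblock{f}{W_\ell}(y)$ with $x\neq y$ (this is where finiteness enters there, via non-bijective $\Rightarrow$ non-injective), uses the bijectivity of the prefix composition to pull $x,y$ back to distinct preimages $x',y'$, and pushes the common image forward to conclude that $\mmjoblock{f}{\mu}$ is not injective. You instead run a surjectivity-flavored counting argument: image cardinalities are non-increasing along the composition, bijectivity of the whole map forces every intermediate image to equal $\B^n$, and then each factor is a surjective self-map of a finite set, hence bijective. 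Your version is more symmetric and dispenses with the ``first bad block'' bookkeeping, treating all factors uniformly; the paper's version has the merit of being collision-based, i.e., it exhibits exactly the kind of witness (a substep index together with two configurations that collide at that substep) that is then guessed and verified in the $\coNP$ algorithm of Theorem~\ref{thm:bij}, so the proof and the algorithm mirror each other. Both arguments use finiteness essentially, as you correctly stress, and both would fail on an infinite state space.
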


\begin{proof}
  The right to left implication is obvious since $\mmjoblock{f}{\mu}$
  is a composition of bijections $\mmjblock{f}{W}$.
  We prove the contrapositive  of the left to right implication,
  assuming the existence of a block $W$ in $\varphi(\mu)$
  such that $\mmjblock{f}{W}$ is not bijective.
  Let $W_\ell$ be the first such block in the sequence $\varphi(\mu)$,
  so there exist $x,y\in\B^n$ such that $x\neq y$ but $\mmjblock{f}{W_\ell}(x)=\mmjblock{f}{W_\ell}(y)=z$.
  By minimality of $\ell$, the composition $g=\mmjblock{f}{W_{\ell-1}}\circ\dots\circ\mmjblock{f}{W_0}$
  is bijective, hence there also exist $x',y'\in\B^n$ with $x'\neq y'$ such that $g(x')=x$ and $g(y')=y$.
  That is, after the $\ell$-th substep the two configurations $x'$ and $y'$ have the same image $z$,
  and we conclude that $\mmjoblock{f}{\mu}(x')=\mmjoblock{f}{\mu}(y')=
  \mmjblock{f}{W_{p-1}}\circ\dots\circ\mmjblock{f}{W_{\ell+1}}(z)$
  therefore $\mmjoblock{f}{\mu}$ is not bijective.
\end{proof}

Lemma~\ref{lem:bij} shows that bijectivity can be decided at the local level of circuits
(not iterated), which can be checked in $\coNP$ and gives Theorem~\ref{thm:bij}.

\begin{theorem}
  \label{thm:bij}
  {\normalfont \textbf{BP-Bijectivity}} is $\coNP$-complete.
\end{theorem}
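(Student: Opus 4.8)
The plan is to lean on Lemma~\ref{lem:bij}, which reduces the global bijectivity of $\mmjoblock{f}{\mu}$ to a purely local statement about the individual blocks of $\varphi(\mu)$; this places the problem in $\coNP$, and hardness then falls out of the parallel special case.

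For membership, I would argue that the complementary problem (non-bijectivity) lies in $\NP$. By Lemma~\ref{lem:bij}, $\mmjoblock{f}{\mu}$ fails to be bijective if and only if there is a block $W$ of $\varphi(\mu)$ for which $\mmjblock{f}{W}$ is not bijective, i.e.\ for which there exist $x\neq y$ with $\mmjblock{f}{W}(x)=\mmjblock{f}{W}(y)$ (on the finite space $\B^n$, non-injective is the same as non-bijective). A certificate therefore consists of an index $i\in\entiers{\ell}$ designating the block $W_i=\{i^k_{i\bmod n_k}\mid k\in\entiers{s}\}$, together with a collision pair $x,y\in\B^n$ with $x\neq y$. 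The subtle point, and the reason the certificate stays polynomial, is that although $\varphi(\mu)$ may contain exponentially many blocks (up to $\ell=\lcm(n_1,\dots,n_s)$), the index $i$ is written in binary on $\O(\mathrm{poly}(n))$ bits, and from $i$ the block $W_i$ is recovered in polynomial time through the residues $i\bmod n_k$. Verifying $\mmjblock{f}{W_i}(x)=\mmjblock{f}{W_i}(y)$ is then a single polynomial-time circuit evaluation, so non-bijectivity is in $\NP$ and \textbf{BP-Bijectivity} is in $\coNP$.

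For $\coNP$-hardness I would reduce from the complement of circuit satisfiability, i.e.\ from deciding whether a Boolean circuit $\phi:\B^n\to\B$ is unsatisfiable, which is $\coNP$-complete. I take the parallel schedule $\mu_\texttt{par}\in\BP{n+1}$, for which $\varphi(\mu_\texttt{par})$ is the single block $\entiers{n+1}$, so that $\mmjoblock{g}{\mu_\texttt{par}}=g$ is just the global map. I build $g$ on $n+1$ automata $\entiers{n+1}$, keeping the first $n$ coordinates $x$ fixed and using the last coordinate $b$ as a flag updated by $g_n(x,b)=b\vee\phi(x)$, so that $g(x,b)=(x,\,b\vee\phi(x))$. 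Then $g(x,0)=(x,\phi(x))$ and $g(x,1)=(x,1)$, which collide exactly when $\phi(x)=\1$; distinct values of $x$ produce distinct first coordinates, so these are the only possible collisions. Hence $g$ is non-injective iff $\phi$ is satisfiable, and bijective iff $\phi$ is unsatisfiable.

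The reduction is plainly polynomial time, which together with membership gives completeness. The one place that warrants care is the hardness gadget: a more naive map sending every satisfying assignment to a single fixed target fails in the corner case of a unique satisfying assignment, where it can accidentally reduce to the identity and thus stay bijective; the extra flag bit $b$ is precisely what converts ``$\phi$ satisfiable'' into a clean, always-present two-to-one collision $(x,0),(x,1)\mapsto(x,1)$, while leaving $g$ equal to the identity when $\phi$ is unsatisfiable. I expect the membership direction to be the conceptual crux, namely the observation that a single block together with its binary index is a polynomial witness despite the exponential blow-up of $\varphi(\mu)$; the hardness construction is then a short and standard gadget.
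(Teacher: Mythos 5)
Your proof is correct and follows essentially the same route as the paper: the membership argument is exactly the paper's (Lemma~\ref{lem:bij}, then a certificate consisting of a binary-encoded substep index---from which the block is recovered via residues modulo the o-block lengths---together with a collision pair, verified by one circuit evaluation), and the hardness likewise comes from the parallel special case. The only difference is presentational: the paper obtains $\coNP$-hardness by citing the known lower bound for the parallel schedule~\cite[Theorem~5.17]{HDR-Perrot2022}, whereas you inline a correct, self-contained reduction from circuit unsatisfiability via the flag-bit gadget $g(x,b)=(x,\,b\vee\phi(x))$ under $\mu_\texttt{par}$, which is the same underlying idea made explicit.
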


\begin{proof}
  A $\coNP$ algorithm can be established from Lemma~\ref{lem:bij},
  because it is equivalent to check the bijectivity at all substeps.
  A non-deterministic algorithm can guess a temporality
  $t\in\entiers{|\varphi(\mu)|}$ (in binary) within the substeps,
  two configurations $x,y$,
  and then check in polynomial time that they certify the non-bijectivity of that substep as follows.
  First, construct $W$ the $t$-th block of $\varphi(\mu)$,
  by computing $t$ modulo each o-block size to get the automata from that o-block.
  Second, check that $\mmjblock{f}{W}(x)=\mmjblock{f}{W}(y)$.

  The $\coNP$-hardness is a direct consequence of that complexity lower bound
  for the particular case of the parallel update schedule~\cite[Theorem~5.17]{HDR-Perrot2022}.
\end{proof}

We now turn our attention to the recognition of identity dynamics.\\[.5em]
\decisionpb{Block-parallel identity}{BP-Identity}
{$(f_i:\B^n\to\B)_{i\in\entiers{n}}$ as circuits, $\mu\in\BPn$.}
{does $\mmjoblock{f}{\mu}(x)=x$ for all $x\in\B^n$?}\\[.5em]
This problem is in $\PSPACE$, and is $\coNP$-hard by reduction
from the same problem in the parallel case~\cite[Theorem~5.18]{HDR-Perrot2022}.
However, it is neither obvious to design a $\coNP$-algorithm to solve it,
nor to prove $\PSPACE$-hardness by reduction from \textbf{Iter-CVP}.

\begin{open}\label{open:id}
  {\normalfont \textbf{BP-Identity}} is $\coNP$-hard and in $\PSPACE$.
  For which complexity class is it complete?
\end{open}

A major obstacle to the design of an algorithm, or of a reduction
from \textbf{Iter-CVP} to \textbf{BP-Identity}, lies in the fact
that, by Theorem~\ref{thm:bij}, ``hard'' instances of the latter are bijective networks
(because the non-bijective instances can be recognized in our immediate lower bound $\coNP$,
and they are all negative instances of \textbf{BP-Identity}).
A reduction would therefore be related to the lengths of cycles in the dynamics of substeps,
and whether they divide the least common multiple of o-block sizes
(for configurations $x\in\B^n$ such that $f(x)=x$) or not ($f(x)\neq x$).

Nonetheless, we are able to prove another lower bound, 
related to the hardness of computing the number of models of a given propositional formula.
The canonical $\ModPoly$-complete problem takes as input a formula $\psi$
and two integers $k$, $i$ encoded in unary, and consists in deciding whether
the number of models of $\psi$ is congruent to $k$ modulo the $i$-th prime number
(which can be computed in polytime).
It generalizes classes $\ModkPoly{k}$ (such as the parity case $\ModkPoly{2}=\PPoly$),
and it is notable that $\SPoly$ polytime truth-table reduces to $\ModPoly$~\cite{J-Kobler1996}.


\begin{theorem}
  {\normalfont \textbf{BP-Identity}} is $\ModPoly$-hard (for polytime many-one reduction).
\end{theorem}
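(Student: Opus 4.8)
The plan is to reduce from the canonical $\ModPoly$-complete problem, taking a formula $\psi$ on $n$ variables together with integers $k,i$ (in unary), and to build an instance $f,\mu$ of \textbf{BP-Identity} such that $\mmjoblock{f}{\mu}$ is the identity if and only if the number of models of $\psi$ is congruent to $k$ modulo $p_i$, the $i$-th prime. The central mechanism to exploit is the one from Lemma~\ref{lem:primes} and Definition~\ref{def:gn}: o-blocks of prime lengths cause the number of substeps $\ell=|\varphi(\mu)|$ to be the product of those primes, so a counter advancing one unit per substep returns to its origin only after $\ell$ substeps, and divisibility/congruence conditions can be encoded by whether cycle lengths of the substep dynamics divide $\ell$. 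Since the ``hard'' instances must be bijective (by Theorem~\ref{thm:bij} and the discussion preceding Open problem~\ref{open:id}), the construction must arrange that each configuration lies on a cycle of the substep-level dynamics, and that the full step $\mmjoblock{f}{\mu}$ acts as identity precisely when every such cycle length divides $\ell$.

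First I would set up a counting gadget: using the primes $p_1,\dots,p_{k_n}$ of Lemma~\ref{lem:primes}, isolate one designated prime $p=p_i$ and build o-blocks whose lengths realize a target value related to the model count of $\psi$ modulo $p$. The idea is to make the dynamics of substeps, restricted to an appropriate family of configurations, consist of permutations whose cycle structure records $\#\psi \bmod p$. Concretely, I would have a block of automata implementing a cyclic shift (a bijective permutation) on a register, so that the register returns to its start after a number of substeps equal to the cycle length; by wiring the advance of this cycle to be gated by the satisfaction of $\psi$ on an enumerated assignment, the effective period becomes a quantity congruent to (a function of) the number of models of $\psi$. The condition ``$\mmjoblock{f}{\mu}$ is the identity'' then translates to ``this period divides $\ell$'', and by choosing the prime-length o-blocks so that $\ell$ is divisible by $p$ but the relevant residue must match $k$, one forces the congruence $\#\psi \equiv k \pmod{p}$ to be equivalent to identity behavior.

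The key technical steps, in order, are: (i)~recall the prime machinery of Lemma~\ref{lem:gn} and fix the target prime $p=p_i$, ensuring $\ell$ is a multiple of $p$; (ii)~design a bijective substep dynamics (so that, by Lemma~\ref{lem:bij}, the whole network stays within the bijective regime and the only question is cycle lengths versus $\ell$) in which a counter-like register enumerates all $2^n$ assignments to $\psi$ and a separate ``accumulator'' register advances by one modulo $p$ exactly once per satisfying assignment; (iii)~arrange the o-block lengths and the singleton placements (as in Example~\ref{ex:counter}) so that after $\ell$ substeps the accumulator has advanced by $\#\psi \bmod p$, and the register returns to the identity if and only if this advance is $0$, i.e.~$\#\psi\equiv 0\pmod p$; (iv)~shift by $k$ to reduce the generic congruence $\#\psi\equiv k\pmod p$ to the divisibility condition, and verify that the construction is computable in polynomial time (using that $n$, $k$, $i$ are in unary and $q_{k_n}=\O(n^4)$).

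The main obstacle I expect is step (iii): reconciling the enumeration of all $2^n$ assignments (which naturally takes $2^n$ substeps) with the available number of substeps $\ell$, which is a product of small primes and is only guaranteed to exceed $2^n$, not to equal it. I would need the accumulator to advance \emph{modulo} $p$ in lockstep with the assignment enumeration so that the residue is read off correctly within the $\ell$ substeps available, while simultaneously keeping every configuration on a genuine cycle (bijectivity) rather than on a transient. Making the accumulator a permutation whose period encodes the count modulo $p$ — rather than a non-injective ``record'' bit as in the previous reductions — is the delicate part, since those earlier constructions (Theorems~\ref{thm:image}--\ref{thm:fixedpoint}) relied on non-bijective recording gadgets that are unavailable here. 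The reduction must therefore replace recording with permutation-period encoding, which is precisely where the $\ModPoly$ structure (counting modulo a prime) meshes with the prime-length o-blocks, and carefully checking this correspondence is where the real work lies.
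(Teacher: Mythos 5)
Your proposal follows essentially the same route as the paper's proof: reduce from the canonical $\ModPoly$ problem using the prime machinery of Lemma~\ref{lem:gn}, let the substep counter itself enumerate the $2^n$ assignments of $\psi$, and replace the non-injective recording bit of the earlier reductions by an accumulator that advances modulo $p_i$ once per satisfying assignment, shifted by $k$ so that $\mmjoblock{f}{\mu}$ is the identity if and only if the model count of $\psi$ is congruent to $k$ modulo $p_i$. The only superfluous element is your requirement in step~(i) that $\ell$ be a multiple of $p$: the correctness criterion is that the accumulator's net advance over the $\ell$ substeps equals zero modulo $p_i$, not that some cycle length divide $\ell$, so no such divisibility condition is needed (and the paper imposes none).
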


\begin{proof}
  Given a formula $\psi$ on $n$ variables, $m$ and $i$ in unary, 
  we apply Lemma~\ref{lem:gn}
  to construct $g_n,\mu_n$ on automata set $P=\entiers{q_{k_n}}$.
  Automata from $P$ have identity local functions,
  and the number of substeps is $\ell=|\varphi(\mu_n)|>2^n$.
  Let $p_i$ be the $i$-th prime number.
  We add:
  \begin{itemize}
    \item $\ell'=\lceil\log_2(\ell)\rceil$ automata numbered $B=\{q_{k_n},\dots,q_{k_n}+\ell'-1\}$,
      implementing a $\ell'$ bits binary counter that increments
      modulo $\ell$ at each substep, except for configurations with a counter
      greater of equal to $\ell$ which are left unchanged.
    \item $\ell''=\lceil\log_2(p_i)\rceil$ automata numbered $R=\{q_{k_n}+\ell',\dots,q_{k_n}+\ell'+\ell''-1\}$,
      whose local functions are:
      \[
        f_R(x)=\begin{cases}
          x_R - m + 1 \mod p_i &\text{if } x_B=0 \text{ and } x_B \text{ satisfies } \psi\\
          x_R - m \mod p_i &\text{if } x_B=0 \text{ and } x_B \text{ does not satisfy } \psi\\
          x_R + 1 \mod p_i &\text{if } 0<x_B<2^n \text{ and } x_B \text{ satisfies } \psi\\
          x_R &\text{otherwise.}
        \end{cases}
      \]
  \end{itemize}
  We also add singletons to $\mu_n$ for each of these additional automata,
  with $\mu'=\mu_n\cup\bigcup_{j\in B\cup R}\{(j)\}$.
  The resulting dynamics of $\mmjoblock{f}{\mu'}$ proceeds as follows.

  Configurations $x$ such that $x_B\geq\ell$ verify $\mmjoblock{f}{\mu'}(x)=x$,
  because all local functions are identities in this case.
  For configurations $x$ such that $x_B<\ell$,
  during the dynamics of substeps from $x$ to $\mmjoblock{f}{\mu'}(x)$,
  the counter $x_B$ takes exactly once the values from $0$ to $\ell-1$,
  with $\mmjoblock{f}{\mu'}(x)_B=x_B$ (it goes back to its initial value).
  Meanwhile, at each substep with $x_B<2^n$,
  the record of automata $R$ is incremented if and only if $x_B$ satisfies $\psi$,
  with a substraction of $m$  when $x_B=0$.
  Since $\ell>2^n$ each valuation of $\psi$ is checked exactly once,
  and $x_R$ gets added the number of models of $\psi$ minus $m$, modulo $p_i$
  (when $2^n\leq x_B<\ell$ automata $R$ are left unchanged).
  Consequently, we have $\mmjoblock{f}{\mu'}(x)_R=x_R$ if and only if it has been incremented
  $m$ times modulo $p_i$, i.e.,, $f,\mu'$
  is a positive instance of \textbf{BP-Identity} if and only if
  $\psi$, $m$, $i$ is a positive instance of \textbf{Mod-SAT}
  (the number of models of $\psi$ is congruent to $k$ modulo $p_i$).
\end{proof}

Our attemps to prove $\PSPACE$-hardness failed, for the following reasons.
To get bijective circuits one could reduce from reversible Turing machines (RTM) and problem
\textbf{Reversible Linear Space Acceptance}~\cite{J-Lange2000}.
A natural strategy would be to simulate a RTM for an exponential number of subteps,
and then simulate it backwards for that same number of substeps,
while ending in the exact same configuration (identity map) if and only if
the simulation did not halt or was not in the orbit of the given input $w$.
The difficulty with this approach is that the dynamics of substeps must not be the identity map
when a \textit{conjunction} of two temporally separated events happens:
first that the simulation has halted, and second that the starting configuration was $w$.
It therefore requires to remember at least one bit of information,
which is subtle in the reversible setting.
Indeed, the constructions of~\cite{J-Lange2000} and~\cite{B-Morita2017} 
consider only starting configurations of the Turing machine in the initial state and with blank tapes.
However, in the context of Boolean automata networks,
any configuration must be considered (hence any configuration of the simulated Turing machine).

Regarding iterated circuits simulating reversible cellular automata
(for which the whole configuration space is usualy considered),
the literature focuses on
decidability issues~\cite{C-Kari2005,J-Sutner2004},
but a recent contribution fits our setting
and we derive the following.
$\FPoPSPACE$ is the class of functions computable in polynomial time
with an oracle in $\PSPACE$.

\begin{theorem}[{\cite[Theorem~5.7]{A-Eppstein2023}}]\label{thm:rca}
  There is a one-dimensional reversible cellular automaton for which simulating
  any given number of iterations, with periodic boundary conditions, is complete for $\FPoPSPACE$
\end{theorem}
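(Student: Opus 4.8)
The plan is to establish the two directions of completeness separately: membership of the iteration problem in $\FPoPSPACE$, and $\FPoPSPACE$-hardness for one fixed reversible cellular automaton. For membership, I would first observe that, for a fixed rule, given a configuration of length $n$ with periodic boundary and a number of iterations $t$ written in binary, the $t$-th image can be produced by iterating the local rule $t$ times while maintaining only the current configuration (size $\O(n)$) together with a binary counter for the remaining steps. This runs in polynomial space and produces an output of polynomial size, hence the problem lies in $\FPSPACEPoly$; using the standard identity $\FPSPACEPoly=\FPoPSPACE$ (a polynomial-space machine with polynomially bounded output is exactly a polynomial-time procedure querying a $\PSPACE$ oracle bit by bit), membership follows. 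This direction is routine.

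For hardness, the strategy is to fix a canonical $\FPoPSPACE$-complete iteration problem and then realise it inside a single universal reversible CA. The natural complete problem is: given a reversible space-bounded Turing machine $M$, an input $w$, and $t$ in binary, output the configuration of $M$ on $w$ after $t$ steps. I claim this is $\FPoPSPACE$-complete because, on the one hand, any $\FPoPSPACE$ function is computable in polynomial space, and deterministic polynomial space equals reversible polynomial space by the reversibilisation of space-bounded computation~\cite{J-Lange2000}, so we may assume $M$ is reversible; and on the other hand, reading the state of such a machine after $t$ steps is itself in $\FPoPSPACE$. The value of the target function is then recovered from the halting configuration by a polynomial-time post-processing, which is permitted for functional (metric) reductions.

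The heart of the argument, and the main obstacle, is to embed the reversible machine $M$ into one fixed reversible CA while keeping everything reversible and the timing controllable. I would use an intrinsically reversible, computation-universal reversible CA in the style of Morita's reversible constructions~\cite{B-Morita2017}, so that the instance $(M,w)$ is compiled into an initial CA configuration whose local rules encode the transition table of $M$ as a reversible signal-machine computation. Periodic boundary conditions are handled by padding with a buffer of blank cells large enough that, within the simulated time horizon, no signal wraps around to corrupt the active region; the CA period then plays the role of a wrapped tape. The delicate points are twofold: each step of $M$ must correspond to a fixed, polynomially computable number $\tau$ of CA steps, so that asking for $t$ steps of $M$ reduces to asking for $t\cdot\tau$ steps of the CA, a quantity the reduction can write in binary; and reversibility must hold \emph{globally}, which forbids the usual trick of erasing garbage, so the simulation must be engineered to carry its history reversibly. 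I expect the careful bookkeeping of this garbage-free, timing-exact, wrap-safe simulation to be where the real work lies, whereas membership and the reduction to reversible space-bounded iteration are comparatively direct.
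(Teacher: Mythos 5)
First, a point of comparison: the paper does not prove this statement at all. It is imported verbatim from \cite[Theorem~5.7]{A-Eppstein2023} and used purely as a black box to derive the corollary that follows it, so there is no in-paper proof to measure your attempt against; what you have written is a reconstruction of Eppstein's theorem. Your overall two-stage plan (fix an $\FPoPSPACE$-complete reversible iteration problem, then embed it into one fixed universal reversible CA) does mirror the structure of that work, and your membership direction is correct and routine: iterating a fixed rule with a binary step counter runs in polynomial space with polynomial-size output, and $\FPSPACEPoly=\FPoPSPACE$ is standard since each output bit is a $\PSPACE$ predicate.

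The hardness direction, however, has two genuine gaps. (i) Your intermediate claim --- that outputting the configuration of a reversible space-bounded machine after $t$ steps, $t$ in binary, is $\FPoPSPACE$-hard --- is the actual content of the theorem, and the justification you give (deterministic polynomial space equals reversible polynomial space, by \cite{J-Lange2000}) does not yield it. The obstruction is structural: in the total bijective setting of a reversible CA, every orbit on the finite configuration space is a cycle, so the simulated machine cannot halt and park on its output; a configuration fixed by the dynamics and also reached from elsewhere would have two preimages, contradicting injectivity. A reduction must therefore produce, in polynomial time, a \emph{specific} binary integer $t$ at which the orbit passes through a configuration from which the answer can be read off, and the natural candidate (the halting time of the $\PSPACE$ computation) is not polynomial-time computable. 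Arranging for the answer to appear at an exactly predictable time --- exact exponential padding driven by reversible clocks, Bennett-style compute--copy--uncompute to close the orbit, control of the orbit's period --- is precisely where the real proof does its work; your sketch files this under ``careful bookkeeping,'' but it is the main idea, not bookkeeping. (ii) Your handling of periodic boundary conditions is wrong as stated: hard instances necessarily have $t$ exponential in the configuration length (for polynomially many iterations the problem is in $\FP$), so a polynomial-size buffer of blank cells cannot guarantee that no signal wraps around ``within the simulated time horizon'' --- a constant-speed signal would wrap exponentially many times. What saves such constructions is that a reversible Turing-machine-head simulation remains spatially confined to the simulated tape region \emph{for all time}, so wrap-around never occurs at all; the argument you need is permanent locality, not locality for long enough.
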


\begin{corollary}
  Given $(f_i:\B^n\to\B)_{i\in\entiers{n}}$ as circuits, $\mu\in\BPn$ such that $\mmjoblock{f}{\mu}$ is bijective,
  $x\in\B^n$ and $t\in\entiers{|\varphi(\mu)|}$ in binary,
  computing the configuration at the $t$-th substep is complete for $\FPoPSPACE$.
\end{corollary}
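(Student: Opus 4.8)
The plan is to establish $\FPoPSPACE$-completeness by sandwiching the problem between membership and hardness, using the reversible cellular automaton (RCA) from Theorem~\ref{thm:rca} as the source of hardness. For membership in $\FPoPSPACE$, I would observe that the function maps an input $(f,\mu,x,t)$ to the configuration obtained after $t$ substeps of the dynamics of $\varphi(\mu)$. To compute each bit of the output, one can iterate the local-function circuits substep by substep, maintaining only the current configuration and a counter, which is exactly the $\PSPACE$ procedure described for \textbf{BP-Step} in Theorem~\ref{thm:image}. A polynomial-time machine can therefore produce each output bit by a single query to a $\PSPACE$ oracle (asking for the value of the chosen bit after $t$ substeps), so the whole configuration is computable in $\FPoPSPACE$.

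For the hardness direction, the key is to encode the RCA of Theorem~\ref{thm:rca} as a Boolean automata network whose block-parallel update realizes the RCA's global transition at the level of substeps. First I would take the one-dimensional RCA on, say, $m$ cells with periodic boundary conditions, and build a BAN $f$ over the cell-state bits together with the prime-based padding automata $P$ and counter automata $B$ of Definition~\ref{def:gn}, so that $\varphi(\mu)$ has $\ell>2^m$ substeps. The local functions on the cell-state automata apply one step of the RCA's local rule at each substep (these are fixed-radius circuits, hence bijective because the global RCA map is bijective), while the padding automata $P$ carry identity local functions; this guarantees, via Lemma~\ref{lem:bij}, that $\mmjoblock{f}{\mu}$ is bijective, which is required by the hypothesis of the corollary. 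The configuration $x$ encodes the RCA's initial tape in the cell-state bits, and the counter $B$ is initialized so that the desired substep index $t$ selects precisely the moment at which the RCA has been iterated the prescribed number of times.

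The core correspondence I would then argue is that the value of the cell-state bits at the $t$-th substep of $\mmjoblock{f}{\mu}$ equals the RCA configuration after the corresponding number of RCA steps, and that reading this off is therefore exactly as hard as the $\FPoPSPACE$-complete RCA simulation problem. Because the target substep index $t$ is supplied in binary and $\ell$ is exponential, the reduction can request an exponentially deep iteration while remaining polynomial-time computable, matching the regime in which the RCA problem is complete. I expect the main obstacle to be the bookkeeping between substep indices and RCA steps: one must ensure that the padding automata $P$ and their prime-length o-blocks do not interfere with the cell dynamics, that exactly one RCA step is applied per full sweep through the relevant substeps, and that the offset induced by the counter $B$ is aligned so that substep $t$ reads a genuine RCA configuration rather than an intermediate state. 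A secondary subtlety is preserving bijectivity globally while still letting the reduction vary over all inputs; since the RCA local rule is reversible and the auxiliary automata use identities, Lemma~\ref{lem:bij} resolves this cleanly once the wiring is checked substep by substep.
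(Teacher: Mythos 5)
Your proposal follows the same route as the paper: membership via bit-by-bit queries to a $\PSPACE$ oracle that runs the \textbf{BP-Step} procedure up to substep $t$, and hardness via a functional reduction from the reversible cellular automaton of Theorem~\ref{thm:rca}, encoding the CA cells as Boolean automata updated at every substep, padding with the prime-block construction of Lemma~\ref{lem:gn} equipped with identity local functions, and getting bijectivity from reversibility of the CA together with Lemma~\ref{lem:bij}. This is exactly the paper's argument (the paper leaves membership implicit), so the core of your proof is sound.

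Two points do need repair, however. First, the counter automata $B$ that you import from the earlier reductions are both unnecessary and harmful here. Unnecessary, because the substep index $t$ is part of the \emph{input} of the target problem: since every cell automaton sits in its own singleton o-block, all cell automata are updated simultaneously at every substep, so one substep of $\mmjoblock{f}{\mu}$ is exactly one CA step; you simply set $t$ equal to the requested number of CA iterations, and there is no offset to align and no ``intermediate state'' to avoid (your worry about ``one RCA step per full sweep'' reflects a construction where cells are updated sequentially, which is not the one being built). Harmful, because the counters of the earlier constructions freeze at a maximal value, hence are not injective; by Lemma~\ref{lem:bij} a single non-bijective substep makes $\mmjoblock{f}{\mu}$ non-bijective, so the produced instance would violate the bijectivity promise in the statement --- and indeed your later claim that ``the auxiliary automata use identities'' contradicts having a counter at all. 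Second, padding so that $\ell>2^m$ (with $m$ the number of cells) is not sufficient: the iteration count $T$ in Theorem~\ref{thm:rca} is given in binary and need not be bounded by $2^m$, so you must choose the padding as a function of $T$, applying Lemma~\ref{lem:gn} with parameter roughly $\log_2 T$ so that $|\varphi(\mu)|>T$ and $t=T$ is a legal input. Both fixes are routine, and with them your argument coincides with the paper's proof.
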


\begin{proof}
  For a fixed reversible cellular automaton (of any dimension),
  given a configuration of size $n$ and a time $t$,
  one can compute in polynomial time a block-parallel update schedule $\mu$
  and circuits for the local functions of a Boolean automata network
  of large enough size (to encode the CA's state space in binary),
  such that:
  \begin{itemize}
    \item $|\varphi(\mu)|>t$ (by Lemma~\ref{lem:gn};
      these automata are left aside with identity local functions),
    \item one substep of $\mmjoblock{f}{\mu}$ simulates
      one step of the CA; and
    \item $\mmjoblock{f}{\mu}$ is bijective (because the CA is reversible, padding with identity).
  \end{itemize}
  This gives a functional Turing many-one reduction from Theorem~\ref{thm:rca}.
\end{proof}

Intuitively, the dynamics of substeps embeds complexity.
The relationship to the complexity of computing the configuration after the whole step
composed of $|\varphi(\mu)|$ substeps (image through $\mmjoblock{f}{\mu}$),
in order to reach \textbf{BP-Identity}, is not obvious.

\medskip

Being a constant map is another global property of the dynamics,
which turns out to be $\PSPACE$-complete to recognize for BANs
under block-parallel update schedules.\\[.5em]
\decisionpb{Block-parallel constant}{BP-Constant}
{$(f_i:\B^n\to\B)_{i\in\entiers{n}}$ as circuits, $\mu\in\BPn$.}
{does there exist $y\in\B^n$ such that $\mmjoblock{f}{\mu}(x)=y$ for all $x\in\B^n$?}

\begin{theorem}
  \label{thm:cst}
  {\normalfont \textbf{BP-Constant}} is $\PSPACE$-complete.
\end{theorem}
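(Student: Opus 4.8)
The plan is to prove membership in $\PSPACE$ by a straightforward enumeration argument, and then establish $\PSPACE$-hardness by reduction from \textbf{Iter-CVP}, adapting the counter-based constructions from the proofs of Theorem~\ref{thm:image} and Theorem~\ref{thm:fixedpoint}. The crux is that being a constant map means \emph{every} configuration $x\in\B^n$ maps to one single fixed target $y$; so unlike the fixed-point or preimage reductions, here I must control the image of all $2^{q_{k_n}+\ell'+\dots}$ configurations simultaneously, forcing them \emph{all} to collapse to one value exactly when the \textbf{Iter-CVP} instance is positive, and to produce at least two distinct images otherwise.

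For membership, I would observe that \textbf{BP-Constant} is decidable in $\PSPACE$: compute the image $\mmjoblock{f}{\mu}(x^0)$ of some fixed configuration $x^0$ (e.g.\ $\0^N$) using the substep-simulation procedure of Theorem~\ref{thm:image}, store this candidate value $y$, and then enumerate all remaining configurations $x$ (with a polynomial-space counter), checking $\mmjoblock{f}{\mu}(x)=y$ for each; we accept iff all checks pass. Each image computation uses polynomial space, and only one configuration and the stored $y$ are held at a time, so the whole procedure runs in polynomial space.

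For hardness, the plan is to reuse the $g_n,\mu_n$ gadget on $P=\entiers{q_{k_n}}$ (giving $\ell=|\varphi(\mu_n)|>2^n$ substeps) together with a counter block $B$ of $\ell'=\lceil\log_2(\ell)\rceil$ bits, a circuit-simulation block $D$ of $n$ bits, and a record bit $R$, all placed in singleton o-blocks as in $\mu'=\mu_n\cup\bigcup_{j\in B\cup D\cup R}\{(j)\}$. The key design goal is: force the counter $B$ to ``normalize'' so that, regardless of its starting value, after $\ell$ substeps it lands on a fixed value (say $\0^{\ell'}$); force $D$ to be reset to $\tilde{x}$ at the substep where the counter is $\ell-1$ and then iterate $C$ afterwards, so that after one full step $D$ deterministically ends on a value independent of its input; and have $R$ end in state $\1$ (regardless of its initial value) precisely when the \textbf{Iter-CVP} instance is positive. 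If these three independences hold, then $\mmjoblock{f}{\mu'}$ sends every configuration to the \emph{same} image, so $\mmjoblock{f}{\mu'}$ is constant iff the instance is positive.

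\textbf{The main obstacle} I anticipate is guaranteeing independence from the \emph{initial} values of $B$, $D$, and $R$ simultaneously. For $B$, any starting counter value less than $\ell$ passes through $\ell-1$ exactly once within the $\ell$ substeps, so the reset-at-$\ell-1$ trick normalizes $B$; but starting values $\geq\ell$ must be handled (e.g.\ by detecting them and forcing the counter down to $0$). For $D$, I force a hard reset to $\tilde{x}$ at the substep with counter value $\ell-1$, so the whole orbit of $\tilde{x}$ is explored afterwards and $D$ lands on a fixed configuration determined only by $\tilde{x}$. The subtle part is $R$: its final value must depend only on whether some iterate of $C$ lights up bit $i$, not on $x_R$; this is arranged by forcing $R:=\0$ at the reset substep (counter $=\ell-1$) and then setting $R:=R\vee x_{q_{k_n}+\ell'+i}$ during the subsequent iterations, so that after a full step $R$ records the answer irrespective of its input. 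Verifying that these resets interact correctly across \emph{all} starting configurations — in particular that no residual dependence on the input survives through the modular counter wrap-around — is where the care must go; once that is checked, $\mmjoblock{f}{\mu'}$ is constant iff \textbf{Iter-CVP} is positive, completing the reduction and hence the $\coPSPACE=\PSPACE$-hardness.
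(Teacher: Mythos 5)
Your $\PSPACE$ membership argument is fine and is essentially the paper's. The hardness reduction, however, has a genuine gap, and it is conceptual rather than a matter of detail. Your stated design goal is \emph{full normalization}: the final values of $B$, $D$ and $R$ should be independent of the starting configuration, with $R$ ending in $\1$ exactly when the \textbf{Iter-CVP} instance is positive. But if you achieved this, the reduction would be wrong: in the negative case every configuration would map to one and the same image ending in $\0$, which is still a constant map. \textbf{BP-Constant} asks whether \emph{some} common image $y$ exists, not whether a particular one does, so negative \textbf{Iter-CVP} instances would become positive \textbf{BP-Constant} instances. The conclusion ``$\mmjoblock{f}{\mu'}$ is constant iff the instance is positive'' does not follow from your three independences --- it is contradicted by them. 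What a correct reduction needs is a controlled \emph{failure} of normalization: two families of configurations whose images agree exactly when the instance is positive. This is what the paper does: configurations with $x_B\neq 0$ are detected (their counter freezes at the all-$\1$ value, which exceeds $\ell-1$, strictly before the last substep) and are routed to the default image $\0^{q_{k_n}}\1^{\ell'}\0^n\1$, while configurations with $x_B=0$ run the simulation and land on $\0^{q_{k_n}}\1^{\ell'}\0^n r$, where $r$ is the \textbf{Iter-CVP} answer; the dynamics is constant iff $r=\1$.

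Moreover, your specific mechanism does not even deliver the independence you claim. If $D$ is reset to $\tilde{x}$ (and $R$ to $\0$) at the substep where the counter reads $\ell-1$, then the number of substeps remaining \emph{after} that reset equals the initial counter value $c$: a configuration starting at $c<\ell$ reaches reading $\ell-1$ after $\ell-1-c$ substeps, so only about $c$ iterations of $C$ occur afterwards. Hence $D$ ends at $C^{c}(\tilde{x})$ and $R$ records only the first $c$ iterates of the orbit --- both final values depend on the input, and for small $c$ the orbit of $\tilde{x}$ is not explored at all, so $R$ can end in $\0$ even on positive instances. Similarly, a counter that wraps to $0$ upon reaching $\ell-1$ is just an increment modulo $\ell$, so after $\ell$ substeps it returns to its initial value $c$ rather than to a fixed one; it does not normalize. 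Genuine normalization of $B$ requires freezing (as in the paper), and the frozen region is then precisely what lets the paper detect $x_B\neq 0$ and send those configurations to the default image --- the asymmetric ingredient your construction is missing.
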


\begin{proof}
  To decide \textbf{BP-Constant}, one can simply enumerate all configurations
  and compute their image (Theorem~\ref{thm:image}) while checking that it always gives the same result.

  \medskip

  For the $\PSPACE$-hardness proof, we reduce from \textbf{Iter-CVP}.
  Given a circuit $C:\B^n\to\B^n$, a configuration $\tilde{x}$ and $i\in\entiers{n}$,
  we apply Lemma~\ref{lem:gn} to construct $g_n,\mu_n$ on automata set $P=\entiers{q_{k_n}}$.
  Automata from $P$ have constant $\0$ local functions,
  and the number of substeps is $\ell=|\varphi(\mu_n)|>2^n$.
  We add (Figure~\ref{fig:cst} illustrates the obtained dynamics):
  \begin{itemize}
    \item $\ell'=\lceil\log_2(\ell)\rceil$ automata numbered $B=\{q_{k_n},\dots,q_{k_n}+\ell'-1\}$,
      implementing a $\ell'$-bits binary counter that increments at each substep,
      and sets all automata from $B$ in state $\1$ when the counter is greater or equal to $\ell-1$;
    \item $n$ automata numbered $D=\{q_{k_n}+\ell',\dots,q_{k_n}+\ell'+n-1\}$, whose local functions
      are given below; and
    \item $1$ automaton numbered $R=\{q_{k_n}+\ell'+n\}$, whose local function is given below.
  \end{itemize}
      \[
        f_D(x)=
        \begin{cases}
          C(\tilde{x})&\text{if }x_B=0\\
          C(x_D)&\text{if }0<x_B<\ell-1\\
          \0^n&\text{otherwise}
        \end{cases}
        \qquad
        f_R(x)=
        \begin{cases}
          \tilde{x}_i&\text{if }x_B=0\\
          x_R\vee x_{q_{k_n}+\ell'+i}&\text{if }0<x_B<\ell\\
          \1&\text{otherwise}
        \end{cases}
      \]
  We also add singletons to $\mu_n$ for each of these additional automata, again by setting
  \[
    \mu'=\mu_n\cup\bigcup_{j\in B\cup D\cup R}\{(j)\}.
  \]

  \begin{figure}
    \centering
    \includegraphics[width=\textwidth]{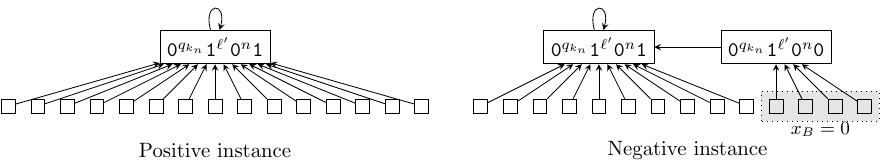}
    \caption{
      Illustration of the dynamics obtained for the reduction
      to \textbf{BP-Constant} in the proof of Theorem~\ref{thm:cst}.
      Configurations $x$ with the counter automata $B$ initialized to $x_B=0$
      either go to $\0^{q_{k_n}}\1^{\ell'}\0^n\1$
      (left, positive instance),
      or to $\0^{q_{k_n}}\1^{\ell'}\0^n\0$ (right, negative instance).
      Only the bit of automata $R$ changes.
    }
    \label{fig:cst}
  \end{figure}

  For any configuration $x$ with a counter not initialized to $0$,
  i.e., with $x_B\neq 0$, the counter will reach and remain in the all $\1$ state
  before the last substep, therefore automata from $D$ will be updated to $\0^n$
  and automaton $R$ will be updated to $\1$.
  We conclude that $\mmjoblock{f}{\mu'}(x)=\0^{q_{k_n}}\1^{\ell'}\0^n\1$.

  For configurations $x$ with $x_B=0$, substeps proceed as follows:
  \begin{itemize}
    \item automata $B$ count until $\ell-1$ at the penultimate substep
      (recall that $\ell=|\varphi(\mu_n)|=|\varphi(\mu'_n)|$),
      which finally brings them all in state $\1$ during the last substep;
    \item automata $D$ iterate the circuit $C$,
      starting from $C(\tilde{x})$ during the first substep; and
    \item automaton $R$ records whether a $\1$ appears or not in the whole orbit of $\tilde{x}$
      (recall that $\ell=|\varphi(\mu'_n)|>2^n$),
      starting from $\tilde{x}$ itself during the first substep
      (even though $x_D\neq\tilde{x}$)
      and without encountering the ``$\1$ otherwise'' case.
  \end{itemize}
  We conclude that the image of $x$ on automata $P$ is $\0^{q_{k_n}}$,
  on $B$ is $\1^{\ell'}$, on $D$ is $\0^n$,
  and on $R$ it depends whether the \textbf{Iter-CVP} instance is positive
  (automaton $R$ in state $\1$) or negative (automaton $R$ in state $\0$).
  
  This completes the reduction: the image is always $\0^{q_{k_n}}\1^{\ell'}\0^n\1$
  if and only if the \textbf{Iter-CVP} instance is positive.
\end{proof}

\section{Conclusion and perspectives}
\label{s:conclusion}

Block-sequential update schedules have a number of substeps limited by
the fact that every automaton is updated only once. Block-parallel update
schedules overcome this restriction, thus significantly raising the
$n$ (number of automata) upper bound for the number of substeps
(Lemma~\ref{lem:gn} gives a backbone construction with more than $2^n$ substeps).
This greatly increases the expressivness of block-parallel dynamics,
and we have demonstrated that this gain in computational power comes along
with higher complexity costs.
A fundamental point is that computing a single transition becomes
$\PSPACE$-hard in this context (Theorem~\ref{thm:image}),
whereas it is feasible in polynomial time for all block-sequential update schedules~\cite{C-Perrotin2023}.

Adapting the constructions in multiple directions,
we were able to prove that almost all classical decision problems
are mashed to $\PSPACE$-completeness
(Subsections~\ref{ss:image} and~\ref{ss:reach}):
verifying that $\mmjoblock{f}{\mu}(x)=y$,
that $\mmjoblock{f}{\mu}(x)=x$, 
deciding the existence of a preimage,
the existence of fixed points and limit cycles,
the existence of a given subdynamics (given as a functionnal digraph),
and deciding whether the dynamics is a constant map.
The problems are in $\Poly$ (single step computations),
$\NP$-complete (existence problems),
or $\coNP$-complete (global dynamical properties)
for block-sequential modes (see~\cite{HDR-Perrot2022}).
In light of these complexity leaps, one might be tempted to extrapolate
to the following conjecture, which turns out to be false (unless a drastic complexity collapse).

\begin{conjecture}[false]
  If a problem is $\NP$-hard or $\coNP$-hard and in $\PSPACE$
  for block-sequential update schedules
  then it is $\PSPACE$-complete for block-parallel update schedules.
\end{conjecture}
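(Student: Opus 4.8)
The plan is to \emph{refute} the conjecture by exhibiting a single counterexample: the property of being bijective. First I would record that bijectivity satisfies the hypothesis for block-sequential update schedules. It is $\coNP$-hard already for the parallel schedule, which is a special block-sequential case, by~\cite[Theorem~5.17]{HDR-Perrot2022}; and since a single step of any block-sequential schedule is computable in polynomial time, deciding injectivity amounts to checking a universally quantified polynomial-time predicate, placing the problem in $\coNP\subseteq\PSPACE$. Thus the antecedent of the conjecture is met for block-sequential modes.

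Second, I would invoke Theorem~\ref{thm:bij}, which states that \textbf{BP-Bijectivity} is $\coNP$-complete, and in particular lies in $\coNP$. The crucial ingredient is Lemma~\ref{lem:bij}: although $\mmjoblock{f}{\mu}$ is a composition of up to exponentially many substeps $\mmjblock{f}{W}$, its bijectivity is equivalent to the bijectivity of each individual block map $\mmjblock{f}{W}$ taken separately. This turns a global, iterated property into a purely local one that can be certified against a single guessed substep, so the problem stays inside $\coNP$ and never needs to climb to $\PSPACE$.

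Third, I would derive the contradiction. If the conjecture held, then \textbf{BP-Bijectivity} would be $\PSPACE$-complete, hence $\PSPACE\subseteq\coNP$; combined with the trivial inclusion $\coNP\subseteq\PSPACE$, this yields $\coNP=\PSPACE$, collapsing the polynomial hierarchy into its first level and contradicting standard complexity-theoretic assumptions. This is precisely the ``drastic complexity collapse'' alluded to in the surrounding discussion, so the conjecture is false conditionally on $\coNP\neq\PSPACE$.

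The main obstacle is conceptual rather than technical: one must identify a property whose block-parallel version provably escapes $\PSPACE$-hardness despite the exponential blow-up of substeps. Most of the problems treated earlier fail to serve this role, because there the dynamics of substeps can simulate an arbitrary linear-bounded machine and thereby inherit $\PSPACE$-hardness. Bijectivity is special because Lemma~\ref{lem:bij} decouples the global map from its iteration, and this decoupling---rather than any clever reduction---is the heart of the counterexample.
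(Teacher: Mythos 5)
Your proposal is correct and follows the paper's own refutation essentially verbatim: both use \textbf{BP-Bijectivity} as the counterexample, both rest on Lemma~\ref{lem:bij} (bijectivity of $\mmjoblock{f}{\mu}$ reduces to bijectivity of each substep) to keep the block-parallel problem in $\coNP$ via Theorem~\ref{thm:bij}, and both conclude that $\PSPACE$-completeness of this problem would force the collapse $\coNP=\PSPACE$. Nothing to add.
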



However, the bijectivity problem is a special case: as proved by
Lemma~\ref{lem:bij}, a single substep is necessary and sufficient to break 
the bijectivity of the automata network's dynamics.
This prevents the drastic increase in the number of substeps
from affecting the complexity of the bijectivity problem.
Thus, \textbf{BP-Bijectivity} stays in $\coNP$ (Theorem~\ref{thm:bij})
and disproves the above conjecture.
It also disproves the following conjecture (unless a drastic complexity collapse),
because a dynamics is bijective if and only if
it contains neither three distinct configurations
$x,y,z$ such that $\mmjoblock{f}{\mu}(x)=z$ and $\mmjoblock{f}{\mu}(y)=z$,
nor two distinct configurations $x,y$ such that
$\mmjoblock{f}{\mu}(x)=y$ and $\mmjoblock{f}{\mu}(y)=y$
(case $k=2$).

\begin{conjecture}[false]
  For any fixed set of directed graphs $\mathbb{G}=\{G_1,\dots,G_k\}$,
  deciding whether a given BAN $f$ under $\mu\in\BPn$
  contains some $G_i\in\mathbb{G}$ as a subgraph (vertex and arc deletions) is $\PSPACE$-complete.
\end{conjecture}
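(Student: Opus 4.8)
The plan is to \emph{disprove} the conjecture by exhibiting a single set $\mathbb{G}$ for which the subgraph-existence problem provably drops below $\PSPACE$-completeness (assuming $\NP\neq\PSPACE$). I would take $\mathbb{G}=\{G_1,G_2\}$ to be precisely the two out-degree-at-most-one graphs drawn in Figure~\ref{fig:Gsub-bij}: $G_1$ a configuration pointing into a fixed point (two vertices with an arc $x\to y$ and a self-loop on $y$), and $G_2$ two distinct configurations sharing a common image (three vertices with arcs $x\to z$ and $y\to z$). Disproving a universally quantified statement requires only one such witness.

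First I would establish the key equivalence: for any BAN $f$ and any $\mu\in\BPn$, the functional graph $\mmjoblock{f}{\mu}$ contains $G_1$ or $G_2$ as a subgraph if and only if $\mmjoblock{f}{\mu}$ is \emph{not} bijective. This is the contrapositive of observation~(a) stated just above the conjecture. In detail, if $\mmjoblock{f}{\mu}$ is not injective on the finite space $\B^n$, then some configuration $z$ has in-degree at least two; if $z$ is not a fixed point then its in-neighbors are all distinct from $z$, so two of them together with $z$ form $G_2$, whereas if $z$ is a fixed point then beside its self-loop it has a further in-neighbor $x\neq z$, which together form $G_1$. Conversely, the presence of either $G_1$ or $G_2$ exhibits two distinct configurations with the same image, hence non-injectivity, and over a finite domain injectivity, surjectivity and bijectivity coincide.

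It then follows that, for this fixed $\mathbb{G}$, deciding whether $\mmjoblock{f}{\mu}$ contains some $G_i\in\mathbb{G}$ is exactly the complement of \textbf{BP-Bijectivity}. By Theorem~\ref{thm:bij}, \textbf{BP-Bijectivity} is $\coNP$-complete, so its complement is $\NP$-complete. Were this subgraph problem also $\PSPACE$-complete, then it would be simultaneously $\PSPACE$-hard and a member of $\NP$; composing the two facts, every problem in $\PSPACE$ would reduce in polynomial time to a problem in $\NP$, forcing $\PSPACE\subseteq\NP$ and hence $\NP=\PSPACE$. Since such a collapse is not expected, this single $\mathbb{G}$ witnesses that the conjecture fails.

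The main subtlety I anticipate is not the complexity argument, which is immediate from Theorem~\ref{thm:bij}, but the careful bookkeeping in the equivalence, specifically justifying the \emph{necessity} of both patterns. Pattern $G_2$ alone fails to detect a fixed point $y$ possessing exactly one non-self in-neighbor $x$: the only configurations available are $x$ and $y$, so at most two distinct vertices can be matched, whereas $G_2$ demands three distinct vertices. This is precisely the role of $G_1$, and making this counting explicit is the one place where the argument must be written with care.
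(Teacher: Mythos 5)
Your proposal is correct and follows essentially the same route as the paper, which refutes the conjecture by taking the two graphs of Figure~\ref{fig:Gsub-bij}, observing that their joint absence characterizes bijectivity of $\mmjoblock{f}{\mu}$, and invoking the $\coNP$-completeness of \textbf{BP-Bijectivity} (Theorem~\ref{thm:bij}) so that $\PSPACE$-completeness of this subgraph problem would collapse $\PSPACE$ to $\NP$. Your explicit case analysis of the equivalence (in particular why $G_1$ is needed for a fixed point with a single non-self in-neighbor) is a careful spelling-out of what the paper states in one sentence, not a different argument.
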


Again because of bijectivity (does for all configurations $x,y$, we have that
$\mmjoblock{f}{\mu}(x)=\mmjoblock{f}{\mu}(y)$ imples $x=y$?) in $\coNP$,
one cannot transfer the Rice-like complexity bound theorem presented in~\cite{C-Gamard2021},
to the level of $\PSPACE$.
Recognition problems are nonetheless still $\NP$-hard or $\coNP$-hard for non-trivial first order questions,
because parallel is a particular case of block-parallel.

The reachability problem, which is $\PSPACE$-complete for block-sequential modes,
remains $\PSPACE$-complete for block-parallel modes (Theorem~\ref{thm:reach}).
Intuitively, on the one hand the idea of reachability can be embedded in a single transition step
of block-parallel update, because it may have an exponential number of substeps.
On the other hand, the sequence of reachability problems at the level of substeps
combines into a reachability problem at the level of steps which is still in $\PSPACE$.

An important track of research in the domain of automata network consists in
extracting information on the dynamics from the interaction graph,
capturing the architecture of the network:
its vertices are the automata, and its arcs represent effective dependencies among local functions.
In certain cases, solely looking at the interaction graph can even be
sufficient to identify the dynamic. For exemple, with block-sequential update
modes, the only possible interaction graph for the constant dynamics is the one without any
arc, and the only possible interaction graph for the identity dynamics is the one with only
positive loops. This means that recognizing these dynamics is as complex
as determining whether the interaction graph matches the expected pattern
($\coNP$ for constant and $\DP$ for identity).
However, this does not hold as soon as an automaton can be updated more
than once (during a single step of $\mmjoblock{f}{\mu}$),
which is the case for block-parallel update modes.
Indeed, with the exclusive or, it is possible to build an AN of size $5$ that computes the
identity function, yet has non-looping and non-positive arcs in its interaction
graph (see Figure~\ref{fig:xor-id}).
Nevertheless, the exclusive or is linear and thus easy
to predict, but it opens the door towards the possibility of more intricate
examples, to be characterized (Open problem~\ref{open:id}).

\begin{figure}
  \hspace*{-.8cm}
  \includegraphics{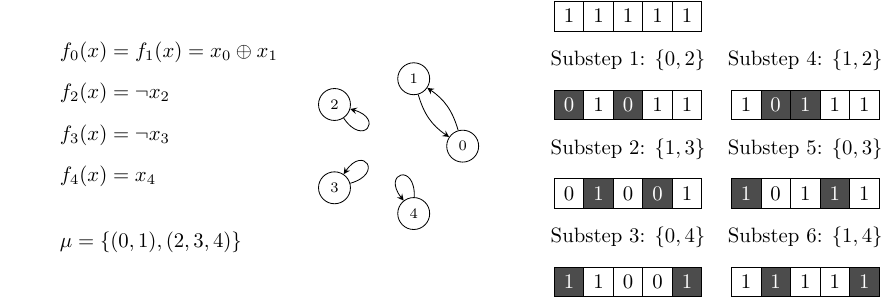}
  \caption{
    A BAN $f$ of size $5$ with non-looping arcs in its interaction graph,
    and a block-parallel update mode $\mu$ such that $\mmjoblock{f}{\mu}$ is the
    identity.
    Left: local functions and update mode.
    Center: interaction graph.
    Right: dynamics of substeps for $\mmjoblock{f}{\mu}(\1\1\1\1\1)=\1\1\1\1\1$.
  }
  \label{fig:xor-id}
\end{figure}

After determining the complexity of recognizing preimages, image points or
fixed points in Subsection~\ref{ss:image}, the next logical step would be the
complexity of counting them. This is not an easy step to make from the
constructions presented in the present work, which are not parcimonious
(for the definition of $\SPSPACE$, see~\cite{J-Ladner1989}).

An important remark for the community is that,
while all the proofs in this paper were written with Boolean automata
networks in mind, the results also hold for non-Boolean automata networks.

Another avenue of research could be questions about the existence of a
block-parallel update schedules verifying a certain property, as
in~\cite{C-Bridoux2021,J-Aracena2013} for block-sequential update schedules.
Given that the fixed point invariance is broken under block-parallel update schedules,
it opens the way for more questions.
The ability to create new fixed points (how and when does it happens?)
is in itself a meaningful track of research.

\section*{Acknowledgment}

This work received support from ANR-18-CE40-0002 FANs,
STIC AmSud CAMA 22-STIC-02 (Campus France MEAE)
and HORIZON-MSCA-2022-SE-01-101131549 ACANCOS projects.

\bibliographystyle{plain}
\bibliography{biblio}

\end{document}